\newtheorem{theorem}{Theorem}
\newcommand{\monmodele}{CoOP-LBM}
\newcommand{\btheta}{\boldsymbol{\theta}}
\newcommand{\balpha}{\boldsymbol{\alpha}}
\newcommand{\bbeta}{\boldsymbol{\beta}}
\newcommand{\bpi}{\boldsymbol{\pi}}
\newcommand{\blambda}{\boldsymbol{\lambda}}
\newcommand{\bmu}{\boldsymbol{\mu}}
\newcommand{\bZ}{\mathbf{Z}}
\newcommand{\bx}{\boldsymbol{x}}
\title{Disentangling the structure of ecological bipartite networks from observation processes}
\author[1]{Emre Anakok}
\author[2]{Pierre Barbillon}
\author[3]{Colin Fontaine}
\author[4]{Elisa Thebault}
\affil[1,2]{Université Paris-Saclay, AgroParisTech, INRAE, UMR MIA Paris-Saclay, 91120, Palaiseau, France.}
\affil[3]{Centre d'Écologie et des Sciences de la Conservation, MNHN, CNRS, SU, 43 rue Buffon, 75005 Paris, France}
\affil[4]{Sorbonne Université, CNRS, IRD, INRAE, Université Paris Est Créteil, Université Paris Cité, Institute of Ecology and Environmental Sciences (iEES-Paris), 75005 Paris, France}
\affil[1]{emre.anakok@agroparistech.fr}
\date{}                     
\begin{document}
\maketitle

\section*{Abstract}

The structure of a bipartite interaction network can be described by providing a clustering for
each of the two types of nodes. Such clusterings are outputted by fitting a Latent Block Model
(LBM) on an observed network that comes from a sampling of species interactions.
However, the sampling is limited and possibly uneven. This may jeopardize the fit of the LBM
and then the description of the structure of the network by detecting structures resulting from
the sampling and not from actual underlying ecological phenomena.
If the observed interaction
network consists of a weighted bipartite network where the number of observed interactions
between two species is available, the sampling efforts for all species can be estimated and used
to correct the LBM fit.  We propose to combine an observation model that accounts for sampling
and an LBM for describing the structure of underlying possible ecological interactions.  We
develop an original inference procedure for this model, the efficiency of which is demonstrated
in simulation studies.
Its relevance and its practical interest are attested on a large dataset of plant-pollinator networks, as we observe structural change on most of the networks depending on whether observation processes are accounted for or not.

\textbf{Keywords:} Latent Block Model,  Sampling Effect, Stochastic Expectation Maximization, Nestedness, Modularity

\section{Introduction}

\subsection{Networks in ecology}

Analysing the structure of ecological networks has proven very enlightening to understand the functioning and response to perturbation of ecological communities, leading to a strong growth in the number of publications in the field \citep{ings_review_2009}. 
Networks have been widely used to study food webs describing trophic interactions among species, before being used for other types of ecological interactions requiring the study of bipartite networks \citep{ings_review_2009}, including mutualistic relationships such as between plants and pollinators \citep{Lara-Romero-2019,kaiser-bunbury_robustness_2010}, plant seeds dispersed by ants \citep{bluthgen2004bottom}, or antagonistic relationships such as between hosts and parasites \citep{hadfield_tale_2014}.

Several metrics are used to study the structure of bipartite ecological networks as a whole, such as nestedness   \citep{Lara-Romero-2019,de_manincor_how_2020,terry_finding_2020,fortuna_nestedness_2010} or modularity \citep{de_manincor_how_2020,terry_finding_2020,fortuna_nestedness_2010,guimera_missing_2009}. 
To go beyond these metrics that target a specific structure, we use LBM \citep{govaert2010latent}, which is the bipartite adaption of the SBM \citep{hollandStochasticBlockmodelsFirst1983}.
This allows one to highlight network structure in a very flexible way, without predefining a structure beforehand, except the existence of blocks. LBM can reveal a modular or a nested structure if present, but is not restricted to these particular structure.  
This model is increasingly used to study the structure of ecological networks \citep{terry_finding_2020,sander_what_2015,leger_clustering_2015,Kefi,oconnor_unveiling_2020,durand-bessart_trait_2023}. Whenever an LBM is fitted on a bipartite network, it produces two clusterings, one for each set of species. These clusterings can be seen as groups of interacting species, each one having a specific pattern of interaction with the other blocks. In this model, the probability that two species are in interaction depends on which groups the species belong to. Such model has allowed to reveal well-defined functional groups (i.e. species sharing both the way they interact with others and their traits linked with mobility and habitat) in complex ecological networks with trophic and non-trophic interactions \citep{Kefi}. Multiple methods exist to estimate the parameters of an LBM \citep{brault_estimation_2014,kuhn_properties_2020}.
 
\subsection{Sampling issues}
 In ecology, sampling a plant pollinator network is labor-intensive and many datasets only reveal a subset of the existing interactions \citep{chacoff_evaluating_2012,jordano_sampling_2016}. The network can be sampled with different methods, such as timed observations \citep{Lara-Romero-2019,kaiser-bunbury_robustness_2010}, transects \citep{de_manincor_how_2020,magrach_plantpollinator_nodate} or study of pollen \citep{de_manincor_how_2020}. These methods have an asymmetrical focus on either insects or plants, and therefore only allow an asymmetrical sampling effort control that could bias the observed network. \cite{de_manincor_how_2020} have shown that different types of sampling on the same network can lead to different structures. Differences in networks could also be caused by the duration of the study \citep{ings_review_2009}, or the observed abundance of the different species. Further, the log-normal distribution of species abundance in ecological networks \citep{sugihara_minimal_1980} makes the sampling of interactions uneven as the observer is more likely to observe interactions of abundant species, and thereby over estimate the degree of abundant species \citep{baldridge_extensive_2016,fort_abundance_2016}. Such effects of sampling processes on ecological network structure have recently be shown to be stronger than ecological effects such as latitudinal gradient or the impact of anthropogenic disturbances \citep{dore_relative_2021}.

Completeness of sampling for a species is defined as the proportion of its observed interactions over the total number of   its actual interactions.
Some propositions to evaluate this completeness rely on either  accumulation curves, which model the rate of new interaction observations over time  \citep{rivera2012effects},
or through external data assessing the abundance of the different interacting species \citep{bluthgen2004bottom}.
The sampling process can induce enormous biases in the statistical analyses of the networks that have not been taken into account in most analyses of ecological network structure. This raises doubts regarding the current understanding of the structure of pollination networks \citep{bluthgen2008interaction}.
In particular, recent studies debate whether the observed nestedness in interaction networks simply results from sampling effects \citep{staniczenko2013ghost,krishna_neutral-niche_2008}.
Moreover, the impact of sampling completeness on many metrics such as modularity or nestedness has been highlighted. \citep{rivera2012effects}. 
We demonstrate in the following motivating example how a credible  scenario of plant-pollinator interaction sampling effects leads to discovering an artificial structure in the data.

\subsection{Motivating example}This example is illustrated in \cref{fig0}. All the following elements are properly defined later in the paper.  Let us consider a binary bipartite interaction network that can be represented by its incidence matrix (black dots for interaction and white dots otherwise). A simple probabilistic model to generate such a network is based on the Erdős–Rényi model \citep{erdosrenyi1959}
where the probability of connection between two species is constant and the connections are drawn independently. The Erdős–Rényi model is considered as unstructured and is equivalent to the LBM with a unique block per set of species. 
If the interactions are sub-sampled uniformly, the inference of an LBM on this network data leads to finding a unique block with a lower estimated probability of connection. However, if the sub-sampling is not uniform, a structure may be detected.  In the simulation setting in \cref{fig0}, the intensity of the sampling depends on each species, for example due to differences in species abundances or unequal sampling effort among species. Even if most of the interactions have been sampled ($\sim 70\%$), a structure is found with two blocks when inferring an LBM (fig. 1C). This structure which is only caused by the sampling effort distribution could be misinterpreted as a true ecological phenomenon. Most studies using LBM do not take into account the potential missing interactions in the inference of the structure. 

    
    
    

\begin{figure}
    \centering
    \includegraphics[width=0.85\textwidth]{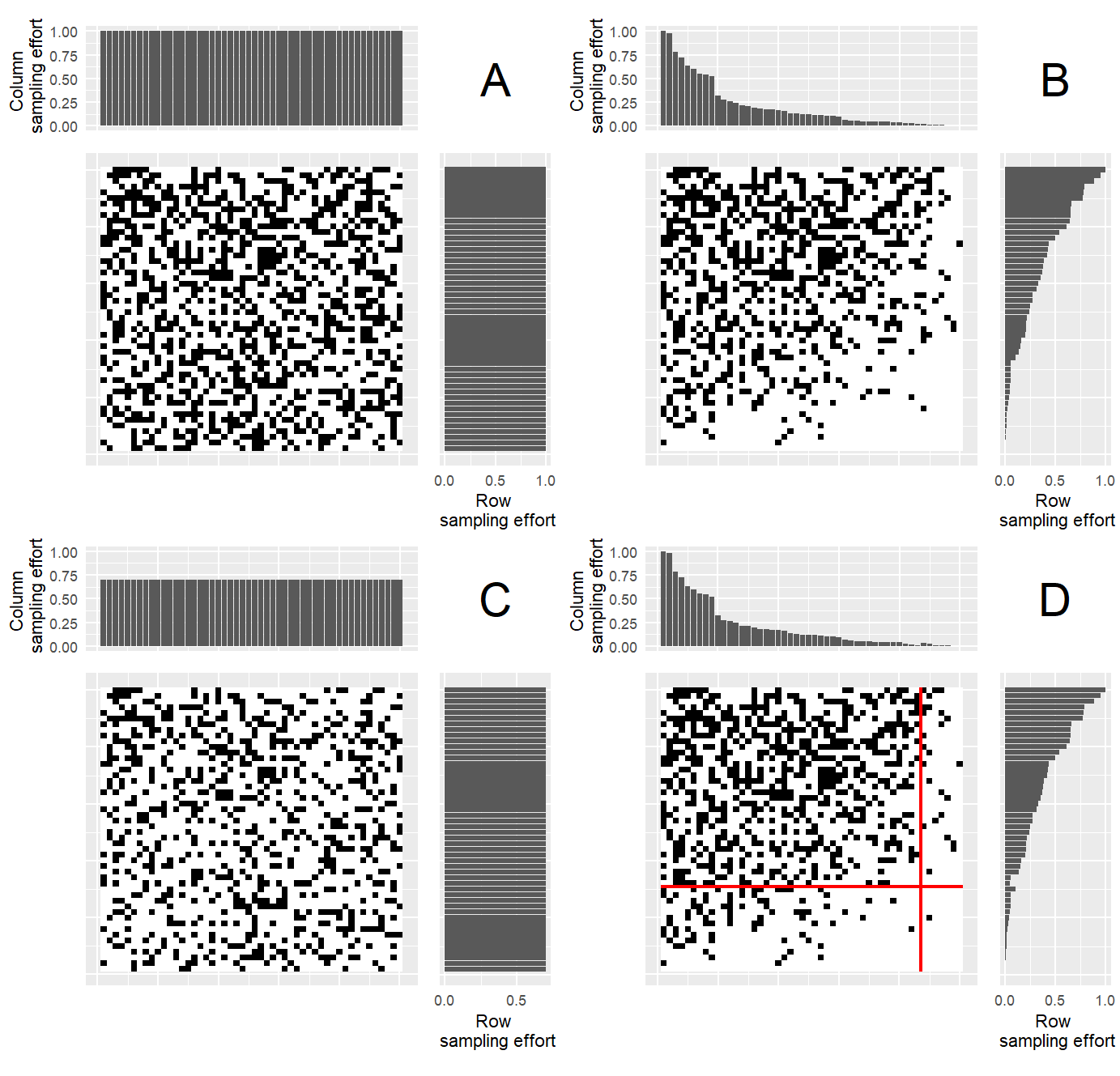}
    \caption{Example of a situation where an unstructured matrix which has been sub-sampled in a non-uniform way leads to a seemingly nested LBM. \textbf{A)} Full matrix of interactions. In this setup, all the interactions are possible and happen with a fixed probability. \textbf{B)} Sub-sampled matrix with 70\% interactions remaining, however the missing interactions are not spread uniformly, due to different sampling efforts. This kind of sub-sampling could appear because of the sampling protocol, abundance or the sampling time. \textbf{C)} Uniformly sub-sampled matrix with 70\% interactions remaining. This kind of sub-sampling is common in a simulation setting.  \textbf{D)} Estimated clustering by LBM on the non-uniformly sub-sampled matrix \textbf{B)}, the identified groups being separated by red lines. }
    \label{fig0}
\end{figure}

\subsection{Our contribution}
Both existing literature and the motivating example highlight the need for taking the observation process into account. 
We focus on cases where the data at hand only consists of a weighted network where the counts of interaction between pairs of species are reported.  No external data such as species abundance, reports of time of observation, traits nor phylogeny are available.
This kind of situation is rather common among the available datasets. 
Our underlying assumption is that the observed weighted network is a result of two phenomena: an ecological phenomenon that could be considered as a binary network providing the possible or impossible interactions and an observation phenomenon which is related to the relative abundances of species and their involvement in pollination interactions 
at the sampling time, and also related to the sampling protocol. 

The latter phenomenon results in smaller or larger counts and possibly to zero counts for actual but not observed interactions. 
Our goal is then to disentangle the two phenomena to recover in particular the structure of the binary network which accounts for the 
possible ecological interactions. 
In order to achieve this goal, we developed a new latent variable probabilistic model called the Corrected Observation Process for Latent Block Model (CoOP-LBM). This model assumes that the observed data is the product of a binary LBM which represents the interactions that are possible or not, with a sampling scheme following Poisson distributions. In this setting, non observed interactions (a $0$ in the incidence matrix) can have two sources: they can come from the binary network (impossible interaction) or they can come from the sampling scheme (missed interaction), but we cannot distinguish impossible interactions from missed interactions. 
We prove the identifiability of the model and we propose an algorithm to estimate the parameters of this model based on a Stochastic version of the Expectation-Maximisation (SEM) algorithm \citep{celeux1992stochastic}. Our SEM alternates between  maximisation steps, where parameters are updated, and two simulation steps, during which the location of missing interactions and the block belongings are simulated. An integrated complete likelihood criterion \citep{biernacki2000} is proposed to select the number of groups for the clustering. A simulation study is performed where we show the performance of our method to recover the clusters and to detect missing interactions.An R package is available at \url{https://github.com/AnakokEmre/CoOP-LBM}.  All the simulations are reproducible and available on \url{https://anakokemre.github.io/CoOP-LBM/index.html}
We also show how the imputation of missing interactions can correct the estimation of some metrics commonly used in ecology.
We applied our method on a large dataset \citep{dore_relative_2021}
that contains 70 networks with counting data. We show the goodness of fit of our model to the data and show that the detected structure is rather different when using a CoOP-LBM compared to an LBM on binarised networks.

\subsection{Related work}
Several studies have started to include missing data within SBM, but often in cases where the missing interactions are properly labelled.

Estimation of the unipartite stochastic block model (SBM) with missing data has been studied for the binary case by \cite{tabouy_variational_2019}, adapting a variational expectation maximization algorithm, and for the weighted case by \cite{Aicher_2014} using a Bayesian approach. Contrary to our study, both have access to where the data is missing.
A few recent studies in ecology also aimed to estimate missing interactions. \cite{terry_finding_2020} provide and compare different methods of estimation and imputation of missing interactions. To estimate the number of missing interactions, \cite{terry_finding_2020} and \cite{macgregor_estimating_2017}  use the Chao estimator \citep{chao_coverage-based_2012}, which will be compared to our method. 
More recently, \cite{papadogeorgou_covariate-informed_2023} have developed a model
which is fitted from the observation of several networks of interactions between birds and seeds where the involved species are redundant and where additional information on species such as traits and phylogeny is available.
Their model is focused on the prediction of potential interactions and not on unraveling a block structure contrary to our model which only requires the observation of a single weighted network of interactions with no additional information.
Our model differs from the degree-corrected SBM (DCSBM) \citep{karrerStochasticBlockmodelsCommunity2011}  and its bipartite counterpart \citep{zhao_variational_2022}, which build their clusters according a Poisson likelihood, but do not take into account the possibility of missing an interaction.
\cite{tabouy_variational_2019} also demonstrated how the inference of the  Stochastic Block Model is jeopardized by the observation process.

\subsection{Outline of the paper}
We give the mathematical definition of the model and prove its identifiability in \cref{Definition}. Then, \cref{inference} details the inference algorithm and the model selection procedure. The simulation study is presented in \cref{simulation}. All the simulations are reproducible and available on \url{https://anakokemre.github.io/CoOP-LBM/index.html}. 
Eventually, the application to 70 plant-pollinator networks is done and discussed in \cref{secappdata}.

\section{Definition of the model}\label{Definition}

The available data are given as a matrix  $R = (R_{i,j})_{\substack{ i = 1, \dots, n_1\\ j = 1, \dots, n_2}}$ where the $n_1$
rows correspond to a type of species (plants e.g.) and the $n_2$ columns correspond to the other type of species in interaction (pollinators e.g.). For all $i = 1, \dots, n_1$ and $j = 1, \dots, n_2$, the elements $R_{ij}\in\mathbb{N}$ since they denote the 
counts of observed interactions between species $i$ and $j$.

We assume that $R = M \odot N $ is the Hadamard product of two matrices, 
with $M$ being a realization of an LBM and  the elements of $N$ follow  Poisson distributions independent on $M$. More precisely, for $M$,
we assume that latent variables  $\bZ^1 = (Z^1_{i})_{ i = 1,\dots, n_1} \in \{1,\dots,Q_1\}^{n_1}$ and $\bZ^2 = (Z^2_{j})_{ j = 1,\dots, n_2} \in \{1,\dots,Q_2\}^{n_2}$ 
provide respectively a partition of species in rows and in columns, each species in row or in column only belonging to one cluster.
We assume that $Z^1_1,\dots,Z^1_{n_1}$ are independent and identically distributed random variables following a multinomial distribution $Mult(1,\balpha)$, with $\balpha \in A=\{\balpha=(\alpha_1,\ldots,\alpha_{Q_1}) \in \mathbb{R}^{Q_1} | \sum_{k=1}^{Q_1}\alpha_k = 1 , \text{ and for all }k,\ \alpha_k>0\}$, such that $\mathbb{P}(Z^1_{i}=k)=\alpha_k,1\le k\le Q_1$. Similarly, we assume that $Z^2_1,\dots,Z^2_{n_2}$ are independent and identically distributed random variables following a multinomial distribution $Mult(1,\bbeta)$, with $\bbeta \in B=\{\bbeta=(\beta_1,\ldots,\beta_{Q_2}) \in \mathbb{R}^{Q_2} | \sum_{l=1}^{Q_2}\beta_l = 1 , \text{ and for all }l,\ \beta_l>0\}$ such that $\mathbb{P}(Z^2_{j}=l)=\beta_l, 1\le l\le Q_2$. Moreover, the latent variables in $\bZ^1$ are also independent of the latent variables in  $\bZ^2$.

Given the latent variables, the distribution of the elements of $M$ is given by 
\begin{equation} 
\mathbb{P}(M_{i,j} = m | Z^1_{i}=k, Z^2_{j}=l) = \pi_{kl}^m (1-\pi_{kl})^{1-m}, \quad m\in\{0,1\}. 
\end{equation}

Moreover, for any integer $v$ let $\Pi_{v} = \{\bx=(x_1,\ldots,x_v) \in (0,+\infty)^{v} | \max_{1\le i\le v} x_i=1 \}$.
We assume that the random variables $(N_{i,j})_{1\le i\le n_1, 1\le j\le n_2}$ are independent, independent of the random variables $(M_{i,j})_{1\le i\le n_1, 1\le j\le n_2}$ and such that, for all $1\le i\le n_1,\ 1\le j\le n_2$, $N_{i,j} \sim \mathcal{P}oisson(\lambda_i\mu_jG) := \mathcal{P}(\lambda_i\mu_jG)$ with $\blambda \in\Pi_{n_1} $, ${\bmu\in \Pi_{n_2}}$ and $G>0$. The parameter $\lambda_i$ can represent the relative sampling effort, the apparent abundances or the relative detectability of row species $i$, $\mu_j$ represents the same for column species $j$, and $G$ is a constant representing the 
maximum sampling effort across all pairs of species.

Therefore, for all $1\le i\le n_1,\ 1\le j\le n_2$, the probability distribution of $R_{i,j} = M_{i,j} N_{i,j}$ given the latent variables $\bZ^1,\bZ^2$ is
\begin{equation}
\mathbb{P}(R_{i,j} = r| Z^1_{i}=k, Z^2_{j}=l;\btheta) = \left\{
    \begin{array}{ll}
        \ \pi_{kl} \frac{(\lambda_i\mu_jG)^{r}}{r!}e^{-\lambda_i\mu_jG}& \mbox{if } r>0 \\
        \\
        1 - \pi_{kl} ( 1 - e^{-\lambda_i\mu_jG})& \mbox{if }  r=0
    \end{array}
\right.
\end{equation}

 where $\btheta = (\balpha,\bbeta,\bpi,\blambda,\bmu,G)$ with 
 $\balpha=(\alpha_k)_{1\le k\le Q_1}\in A$, $\bbeta=(\beta_l)_{1\le l\le Q_2}\in B$, $\bpi=(\pi_{kl})_{1\le k\le Q_1,1\le l\le Q_2}\in[0,1]^{Q_1\times Q_2}$, $\blambda=(\lambda_i)_{1\le i\le n_1}\in\Pi_{n_1}$, ${\bmu=(\mu_j)_{1\le j\le n_2}\in\Pi_{n_2}}$, $G>0$, 
 denotes all the parameters to be estimated.
The full log-likelihood is then 

$$\log\mathcal{L}(\btheta;R,\bZ^1, \bZ^2) = \log\mathcal{L}(\balpha;\bZ^1) + \log\mathcal{L}( \bbeta;\bZ^2) + \log\mathcal{L}(\bpi,\blambda,\bmu,G;R|\bZ^1, \bZ^2) $$ 
with

\begin{equation}
    \log\mathcal{L}(\balpha,\bZ^1)= \sum_{i=1}^{n_1}\sum_{k=1}^{Q_1} \mathbbm{1}_{\{Z^1_i = k\}} \log(\alpha_k), \quad
    \log\mathcal{L}(\bbeta;\bZ^2) =\sum_{j=1}^{n_2}\sum_{l=1}^{Q_2}\mathbbm{1}_{\{Z^2_j = l\}}  \log(\beta_l)
\end{equation}

\begin{align}
&\log\mathcal{L}(\bpi,\blambda,\bmu,G;R|\bZ^1, \bZ^2)  = \\ &\sum_{\substack{i,j\\R_{i,j}>0}}
\sum_{k=1}^{Q_1}\sum_{l=1}^{Q_2}
\mathbbm{1}_{\{Z^1_i = k\}} \mathbbm{1}_{\{Z^2_j = l\}} \biggl(log(\pi_{kl}) + R_{i,j}\log(\lambda_i\mu_j G) - \lambda_i\mu_j G-\log(R_{i,j}!)\biggr)\notag \\
    + &\sum_{\substack{i,j\\R_{i,j}=0}}\sum_{k=1}^{Q_1}\sum_{l=1}^{Q_2} \mathbbm{1}_{\{Z^1_i = k\}}\mathbbm{1}_{\{Z^2_j = l\}}\biggl(\log(1-\pi_{kl}(1-e^{\lambda_i\mu_j G}))\biggr)\notag\,.
\end{align}

Since the variables $\bZ^1$ and $\bZ^2$ are latent, they are integrated out in order to obtain the observed log-likelihood:

\begin{equation}\label{eqobslik}
 \log \mathcal{L}(\btheta;R) = \sum_{(\bZ^1,\bZ^2) \in \{1,\ldots,Q_1\}^{n_1}\times \{1,\ldots,Q_2\}^{n_2}}\log\mathcal{L}(\btheta;R,\bZ^1, \bZ^2)\,.
\end{equation}

This model is identifiable according to the following theorem.
\begin{theorem}
Sufficient condition for identifiability. 
Under the following assumptions on the parameters in $\btheta$ and the size of the network:
\begin{itemize}
    \item (A1) for all $1 \leq k \leq Q_1,\alpha_k  > 0$ and the coordinates of vector $\tau = \bpi \bbeta$ are distinct,
     \item (A2) for all $1 \leq k \leq Q_2,\beta_k  > 0$ and the coordinates of vector $\sigma= \balpha^\top \bpi $ are distinct (where $\balpha^\top$ is the transpose of $\balpha$),
     \item (A3) $n_1 \geq 2Q_2 -1$ and $n_2 \geq 2Q_1-1$ ,
     \item (A4) $G>0$, $\blambda\in \Pi_{n_1}$, $\bmu \in \Pi_{n_2}$
\end{itemize}

then \monmodele{} is identifiable. 
\end{theorem}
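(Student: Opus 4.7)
The plan is to identify the sampling parameters $(\blambda,\bmu,G)$ first and then the LBM parameters $(\balpha,\bbeta,\bpi)$, exploiting the multiplicative decomposition $R_{i,j}=M_{i,j}N_{i,j}$. Because $N_{i,j}$ is Poisson with a fully specified form, its rate is encoded in the low-order moments of $R_{i,j}$; once that rate is known, what remains is the identification of an inhomogeneous binary LBM to which Kruskal-type tensor arguments apply.

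For Step 1, I would compute the first two factorial moments of a single entry. Using independence of $M_{i,j}$ and $N_{i,j}$, the Poisson identities $\mathbb{E}[N_{i,j}]=\lambda_i\mu_j G$ and $\mathbb{E}[N_{i,j}(N_{i,j}-1)]=(\lambda_i\mu_j G)^2$, and setting $p=\sum_{k,l}\alpha_k\beta_l\pi_{kl}=\mathbb{P}(M_{i,j}=1)$, one gets
\begin{align*}
\mathbb{E}[R_{i,j}] = p\,\lambda_i\mu_j G, \qquad \mathbb{E}[R_{i,j}(R_{i,j}-1)] = p\,(\lambda_i\mu_j G)^2.
\end{align*}
Both are observable, so their ratio identifies $\lambda_i\mu_j G$ for every $(i,j)$ and then $p$; the normalisations $\max_i\lambda_i=\max_j\mu_j=1$ from (A4) uniquely split this product via $G=\max_{i,j}\lambda_i\mu_j G$ and $\lambda_i,\mu_j$ obtained from the corresponding row/column maxima.

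For Step 2, since each $q_{i,j}:=1-e^{-\lambda_i\mu_j G}$ is now known, I turn to $Y_{i,j}:=\mathbbm{1}\{R_{i,j}>0\}$, whose joint law is a marginal of that of $R$ and therefore identified. Conditionally on the blocks, $Y_{i,j}$ is Bernoulli with parameter $\pi_{kl}q_{i,j}$, so $Y$ follows an ``inhomogeneous'' binary LBM with known entry-wise prefactors. I would then mirror the classical LBM identifiability proof based on Kruskal's tensor decomposition theorem: select $2Q_1-1$ rows and $2Q_2-1$ columns, as permitted by (A3); partition each side into three disjoint groups and cast the joint distribution of the corresponding submatrix of $Y$ as a three-way tensor whose three factor matrices are indexed by the $Q_1$ row-profile and $Q_2$ column-profile vectors; and conclude via Kruskal. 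Assumptions (A1) and (A2), i.e.\ distinctness of the coordinates of $\tau=\bpi\bbeta$ and $\sigma=\balpha^\top\bpi$, translate directly into the maximal Kruskal-rank conditions required, and yield $\balpha,\bbeta$ and $\bpi$ up to the usual simultaneous relabelling of row and column blocks.

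The main obstacle is Step 2: the prefactors $q_{i,j}$ break the identical-distribution property of rows (or columns) sharing a latent class, which is the backbone of the standard LBM argument, since two rows $i,i'$ in class $k$ now yield different conditional success vectors $(\pi_{k\cdot}\,q_{i,\cdot})$. My remedy is to factor each conditional probability as $q_{i,j}\cdot\pi_{kl}$ with $q_{i,j}$ known and divide these weights out of the tensor slices, so that the Kruskal ranks of the resulting ``core'' factor matrices are controlled exactly by (A1)--(A2). Once the mixture components are separated, recovering $\bpi$ from $\tau$ and $\sigma$ and then $\balpha,\bbeta$ from the mixing weights is routine.
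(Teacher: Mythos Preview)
Your two-step architecture matches the paper's, but both steps are carried out with different tools.

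For Step 1, the paper also starts from $\mathbb{E}[R_{i,j}]=p\,\lambda_i\mu_jG$ and uses ratios together with the normalisations in (A4) to pin down $\blambda$ and $\bmu$. It then isolates $G$ not via your second factorial moment but via the ratio $\mathbb{E}[R_{i^*,j^*}]/\mathbb{P}(R_{i^*,j^*}>0)=G/(1-e^{-G})$ at an index where $\lambda_{i^*}=\mu_{j^*}=1$, inverting the monotone map $x\mapsto x/(1-e^{-x})$. Your identity $\mathbb{E}[R_{i,j}(R_{i,j}-1)]/\mathbb{E}[R_{i,j}]=\lambda_i\mu_jG$ is cleaner and recovers all three pieces at once.

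For Step 2, the paper does not invoke Kruskal but follows the Vandermonde route of Celisse \emph{et al.}\ and Brault. It restricts attention to the all-ones events $u_p:=\mathbb{P}(R_{1,1}>0,\ldots,R_{1,p}>0)$, which factor exactly as $u_p=K_p\sum_k\alpha_k\tau_k^{\,p}$ with $K_p=\prod_{j\le p}q_{1,j}$ known; the ``divide out $q_{i,j}$'' step you flag as the main obstacle is then a one-line quotient, and the power sums $\sum_k\alpha_k\tau_k^{\,p}$ for $p=0,\ldots,2Q_1-1$ feed a Vandermonde-determinant argument that recovers $\tau$ and $\balpha$, symmetrically $\sigma$ and $\bbeta$, and finally $\bpi$. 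This is smoother than your Kruskal sketch precisely because the all-ones probabilities factor multiplicatively in the $q_{i,j}$, whereas the full tensor you would need for Kruskal involves terms $1-\pi_{kl}q_{i,j}$ that do not. Your plan can be made rigorous---knowing the joint law of $Y$ and all the $q_{i,j}$, the identity $\mathbb{P}(M_{i,j}=1\ \forall (i,j)\in S)=\mathbb{P}(Y_{i,j}=1\ \forall (i,j)\in S)\big/\prod_{(i,j)\in S}q_{i,j}$ recovers the full law of $M$ by inclusion--exclusion, after which any standard LBM identifiability result applies---but the paper's direct Vandermonde computation is more explicit.

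One slip: you have the row/column counts reversed. Assumption (A3) gives $n_2\ge 2Q_1-1$ \emph{columns}, which is what is needed to identify the $Q_1$-component row mixture $(\balpha,\tau)$, and $n_1\ge 2Q_2-1$ \emph{rows} for $(\bbeta,\sigma)$; your sentence swaps these.
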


\begin{proof}{}
  The proof is given in the appendix and is in two parts: the first part deals with the identifiability of $(\blambda,\bmu,G)$, the second part, adapted from the proof of \cite{celisse_consistency_2012} and \cite{brault_estimation_2014}, deals with the identifiability of $(\balpha,\bbeta,\bpi)$.
\end{proof}

\paragraph{Ecological justification of the CoOP-LBM}
The CoOP-LBM connects well with the way species interaction networks are viewed in ecology because ecological studies make a clear distinction between the "true" network describing all possible interactions among species and the observed network resulting from sampling process and the relative abundance of the species at a given location.

On other types of ecological data such as the distribution of species in space, similar approaches combining a model for the ecological phenomenon and a model for the observation process is widely used this approach  that combines a model for the ecological phenomenon and a model for the detection during the sampling process is widely used \citep{mackenzieEstimatingSiteOccupancy2002,doser2022a,doser2022b}.
In the context of pollination networks, unobserved interactions have been categorized as either forbidden interactions (i.e. the interaction cannot occur because species do not co-occur or have mismatching traits) or missing interactions (i.e. the interaction exists but would require additional sampling to be detected) \citep{olesen_missing_2011,jordano_sampling_2016}. Our model of interaction sampling process also relates well with classical assumptions and results on species interactions in ecology. Abundances of plants, and pollinators, which could relate in our model with species relative sampling effort denoted by $\lambda_i$ and $\mu_j$, are known to greatly correlate with interaction frequency in pollination networks \citep{fort_abundance_2016}. 
The probability of interaction among species is moreover often assumed to be directly proportional to the product of species relative abundances in ecological networks under the mass action hypothesis  \citep[e.g.][]{staniczenko2013ghost}, hence this corresponds to the expected value of the Poisson distribution in our model: $\lambda_i \mu_j G$, with $G$ being the maximum sampling effort across all pair of species. It should also be noted that our model assumes no species preference, which could also affect species interaction probability in ecological networks \citep{ staniczenko2013ghost}. Defining a preference parameter by species pair $\lambda_{i,j}$ would not lead to a unique estimation for observations where $R_{i,j}=0$, and could over-parameterize the model. This model would be interesting if we had access to $\lambda_{i,j}$ through a proxy. Such a proxy could rely on using a set of covariates on plants and insects describing different species traits, such as mouthpart and corola length for insect and plant, respectively, that would allow the estimation of a similarity measure or a distance measure among the interacting partners.
However such traits related to the choice of interacting partners are not available for most datasets of plant pollinator networks, which only include matrices, matrices providing the counts of observed interactions, e.g. \cite{dore_relative_2021} the traits governing the species preferences remain to be identified for plant-pollinator networks.
As such, the majority of datasets of plant-pollinator networks only include matrices, providing the counts of observed interactions, without covariates, e.g. \cite{dore_relative_2021}. Additionally,  we have studied the robustness of our model to different misspecifications with simulations, to confirm that the model is still able to perform correctly in various settings.


\section{Inference and model selection}\label{inference}

From the observation of an interaction network where the counts of occurring interactions are recorded, our goal is to infer our CoOP-LBM.
This inference can be separated into two parts: inferring the parameters when the numbers of blocks in rows ($Q_1$) and columns ($Q_2$) are known, selecting these numbers of blocks. The former part cannot be dealt with by directly maximising the observed likelihood given in Equation \eqref{eqobslik} since the sums over the set $\{1,\ldots,Q_1\}^{n_1}\times\{1,\ldots,Q_2\}^{n_2}$ quickly becomes intractable as $Q_1$, $Q_2$, $n_1$ or $n_2$ grow.
A classical solution in mixture models with latent variables is to make recourse to an Expectation-Maximisation (EM) algorithm \citep{dempster1977maximum}. However, due to dependencies between the latent variables when conditioned to the observed data, the EM algorithm is not practicable  in the SBM or the LBM.
Several alternatives have been proposed such as a variational approximation of the EM (VEM) algorithm \citep{daudin_mixture_2008,govaert2008block} or stochastic version of the EM algorithm \citep{brault_estimation_2014,kuhn_properties_2020}.
In Section \ref{ssecInf}, 
for a given $Q_1$ and $Q_2$, we provide our inference procedure which is inspired by a version of a stochastic EM algorithm \citep{brault_estimation_2014}. 
In Section \ref{ssecMod}, we derive a penalised likelihood criterion to 
select the numbers of blocks $Q_1$ and $Q_2$. This criterion is an Integrated Classification Likelihood (ICL) which has proven its practical efficiency in many blockmodels \citep{daudin_mixture_2008,brault_estimation_2014}.

\subsection{Estimation of parameters}\label{ssecInf}
The estimation of the parameters given $Q_1$ and $Q_2$ follows this scheme described in Algorithm \ref{algo:vem:nmar}. The steps are detailed in the text hereafter.

\begin{algorithm}[h]
  \SetSideCommentLeft
  \DontPrintSemicolon
  \KwSty{Initialisation:} Provide $\bZ^1_{(0)},\bZ^2_{(0)}, \bpi_{(0)},\Tilde{M}_{(0)}$.\;
  \Repeat{Convergence or max number of iterations reached.}{
    \begin{enumerate}
        
        \item M-step a) :  update  $\balpha_{(n+1)},\bbeta_{(n+1)}|\bZ^1_{(n)}, \bZ^2_{(n)}$, 
        \item
    M-step b) : update  $\blambda_{(n+1)},\bmu_{(n+1)},G_{(n+1)}|\Tilde{M}_{(n)}$, 
    \item
    S-step a): simulate $ \Tilde{M}_{(n+1)} |\bZ^1_{(n)}, \bZ^2_{(n)} ,\bpi_{(n)},\blambda_{(n+1)},\bmu_{(n+1)},G_{(n+1)}$, 
    \item
    M-step c) : update $\bpi _{(n+1)} |\Tilde{M}_{(n+1)},\bZ^1_{(n)}, \bZ^2_{(n)}$ , 
    \item
     S-step b) :  simulate  $\bZ^1_{(n+1)}, \bZ^2_{(n+1)} |\balpha_{(n+1)},\bbeta_{(n+1)},\bpi_{(n+1)},\Tilde{M}_{(n+1)}$.
    \end{enumerate}
  }
 \caption{Stochastic EM for \monmodele{} inference}
 \label{algo:vem:nmar}
\end{algorithm}

The initial clusterings of nodes $\bZ^1_{(0)}$ and $\bZ^2_{(0)}$ can be obtained with hierarchical, spectral or k-means clustering algorithms. The initial matrix of probabilities of connection $\bpi_{(0)}$ is computed as the maximum likelihood estimator with $\bZ^1_{(0)}$ and $\bZ^2_{(0)}$ given, and $\Tilde{M}_{(0)}$ is initialized with  the matrix $(\mathbbm{1}_{\{R_{i,j}>0\}})_{1\le i\le n_1,1\le j\le n_2}$. 

After the burn-in period, the algorithm ends when $||\hat{\btheta}_{(n)}-\hat{\btheta}_{(n+1)}||_2< \epsilon $ with $\hat{\btheta}_{(n)} = \frac{1}{n} \sum_{i=1}^n {\btheta}_{i}$ or when the maximum number of iterations is reached.
Convergence of similar stochastic EM  algorithms to a local maximum has been proved by \cite{delyon_convergence_1999}.

\paragraph{M-Step a)}
To update $\balpha,\bbeta$ given $\bZ^1_{(n)}, \bZ^2_{(n)}$, we use the
 maximum likelihood estimators:
 
\begin{equation}\forall 1\le k\le Q_1,\ 1\le l\le Q_2,\quad \alpha_k= \frac{1}{n_1}\sum_{i=1}^{n_1} \mathbbm{1}_{\{Z^1_i = k\}},
\quad \beta_l =\frac{1}{n_2} \sum_{j=1}^{n_2}
\mathbbm{1}_{\{Z^2_j = l\}}. \end{equation} 
\paragraph{M-Step b)}

$\lambda$ is updated with the following fixed point algorithm : 
\begin{equation}
\lambda \propto \frac{\sum_{j=1}^{n_2}R_{k,j}}{\sum_{j=1}^{n_2}m_{k,j}\frac{\sum_{i=1}^{n_1}R_{i,j}}{\sum_{i=1}^{n_1}m_{i,j}\lambda_{j}}}.\end{equation}

As $max_i \lambda_i =1$, a solution $\lambda$ that has been found by the fixed point algorithm can be divided by its maximum coordinate in order to have an estimation of $\lambda$. Once $\lambda$ calculated, we estimate \begin{equation}
G\mu_l= \frac{\sum_{i=1}^{n_1}R_{i,l}}{\sum_{i=1}^{n_1}m_{i,l}\lambda_{l}}.\end{equation}
 Finally, we deduce $G$ and $\mu$ by fixing $\max_j \mu_j = 1$.
As the matrix $M$ is not observed, it is replaced by a simulated version of it  $\Tilde{M}_{(n)}$.
Additional details are given in the supplementary material \ref{M-Steb b}.

\paragraph{S-Step a)}
The purpose of $\Tilde{M}_{(n)}$ is to distinguish zeros that come from the $M$ matrix (impossible interaction) and the $N$ matrix (missing interaction) by completing all zeros of the matrix $(\mathbbm{1}_{\{R_{i,j}>0\}})_{i,j}$ by a Bernoulli variable with probability $\mathbb{P}(M_{i,j}=1|R_{i,j}=0,\lambda_i, \mu_j,G,\pi, Z^1,Z^2)$. 

\begin{align}
&\mathbb{P} (M_{i,j}=1|R_{i,j}=0,\lambda_i, \mu_j,G,\pi, Z^1_{i}=k,Z^2_{j}=l)\notag\\
& \notag\\
 = &\frac{\mathbb{P}(M_{i,j}=1 , R_{i,j}=0 |\lambda_i, \mu_j,G,\pi, Z^1_{i}=k,Z^2_{j}=l)}{\mathbb{P}(R_{i,j}=0 |\lambda_i, \mu_j,G,\pi,Z^1_{i}=k,Z^2_{j}=l)}\notag\\
=&\frac{\pi_{kl}e^{-\lambda_i \mu_j G}}{1-\pi_{kl}(1-e^{-\lambda_i\mu_jG})}.
\end{align}

$\Tilde{M}_{(n+1)}$ is simulated with the following scheme:
$
\Tilde{M}_{i,j} = 
    \begin{cases}
    1 ,& \text{if } r_{i,j}> 0\\
    Y_{i,j},& \text{if } r_{i,j}= 0
    \end{cases}
$
where $Y_{i,j}$ follows a Bernoulli distribution with parameter
$\frac{\pi_{kl}e^{-\lambda_i  \mu_j G}}
{1-\pi_{kl}(1-e^{-\lambda_i\mu_j G})}\,.$
\paragraph{M-Step c)}
Update $\bpi_{(n+1)}$ with the maximum likelihood estimator given $M$, where $M$ is replaced by $\Tilde{M}$

\begin{equation}
\forall 1\le k\le Q_1,\ 1\le l\le Q_2,\quad   \pi_{kl}= \frac{\sum_{i,j}\mathbbm{1}_{\{Z^1_i = k\}}\mathbbm{1}_{\{Z^2_j = l\}}\Tilde{M}_{i,j}}{\sum_{i,j}\mathbbm{1}_{\{Z^1_i = k\}}\mathbbm{1}_{\{Z^2_j = l\}}}\,. \end{equation}

\paragraph{S-Step b)}
The latent variables $\bZ^1_{(n+1)}$ given $\bZ^2_{(n)}$ and the parameters are simulated first then the other latent variables $\bZ^2_{(n+1)}$ given $Z^1_{(n+1)}$ and the parameters are simulated.
Then for all $1\le k \le Q_1$, $1\le i\le n_1$:

\begin{align}
   &\mathbb{P}(Z_{i}^1=k|R,\btheta,\bZ^2) \propto \mathbb{P}(R |\btheta,Z^1_{i}=k,\bZ^2)\mathbb{P}(Z^1_{i}=k) \\
   &\propto \alpha_k\prod_{j=1}^{n_2}\prod_{l=1}^{Q_2} \left(\pi_{kl} \left(\frac{e^{-\lambda_i\mu_jG}}{R_{i,j}}(\lambda_i\mu_jG)^{R_{i,j}} \right)^{\mathbbm{1}_{R_{i,j}>0}}\left(1-\pi_{kl}(1-e^{-\lambda_i\mu_jG}) \right)^{\mathbbm{1}_{R_{i,j}=0}}\right)^{\mathbbm{1}_{Z_{jl}=1}}.\notag
\end{align}
The simulation of $\bZ^2_{(n+1)}$ is done in a symmetrical way with $\bZ^1_{(n+1)}$ being given.

\paragraph{Algorithm output}
At the end, all the iterations following a ``burn-in'' phase  $\hat{\btheta}_{(n)}=(\balpha_{(n)},\bbeta_{(n)},\bpi_{(n)},\blambda_{(n)},\bmu_{(n)},G_{(n)})$   are averaged as $\hat{\btheta} = \frac{1}{n} \sum_n \hat{\btheta}_{(n)}$. Even if the methods allow us to give the probabilities of  belonging to each block,  a final hard clustering is determined with the majority rule given by $\underset{k}{argmax}\:  \mathbb{P}(Z^1_{i}=k)$.

\subsection{Model selection}\label{ssecMod}
In most cases, the number of groups $Q_1$ and $Q_2$ are not known.
 Then, in order to choose the number of blocks in a \monmodele{}, we propose to use an ICL criterion similarly to the ICL criteria developed in \cite{daudin_mixture_2008} or \cite{govaert2008block}. The parameter $\btheta = (\balpha,\bbeta,\bpi,\blambda,\bmu,G)$ lies in  $\Theta = A \times B \times \Pi_{n_1} \times \Pi_{n_2}  \times \mathbb{R}_{+}^\star$. Let $g(\btheta |m_{Q_1,Q_2})$ be the prior distribution of the parameters for a model $m_{Q_1,Q_2}$ with $Q_1$ and $Q_2$ classes. The ICL
criterion is an approximation of the complete-data integrated
likelihood defined as
\begin{equation}
\mathcal{L}(R,\bZ^1,\bZ^2|m_{Q_1,Q_2}) = \int_{\Theta} \mathcal{L}(\btheta;R,\bZ^1,\bZ^2|m_{Q_1,Q_2})g(\btheta|m_{Q_1,Q_2})d\btheta\,.\end{equation}

 \begin{theorem}
 For a model $m_{Q_1,Q_2}$ with $Q_1$ and $Q_2$ blocks the corresponding ICL criterion is:
 \begin{align}
     ICL(m_{Q_1,Q_2}) =& \notag \max_{\btheta}\mathcal{L}(\btheta;R,\widehat{\bZ^1}, \widehat{\bZ^2} | m_{Q_1,Q_2})\\
     &-\frac{Q_1-1}{2}\log(n_1) - \frac{Q_2-1}{2}\log(n_2) - \frac{Q_1Q_2 + n_1 +n_2-1}{2}\log(n_1n_2),
 \end{align}
 with $\widehat{\bZ^1}, \widehat{\bZ^2}$ the hard clusterings outputted by the inference algorithm.
 \end{theorem}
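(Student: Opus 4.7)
The plan is to derive the criterion as a Laplace (BIC) approximation of the complete-data integrated likelihood, following the template of \cite{daudin_mixture_2008} and \cite{govaert2008block} but adapted to the additional sampling parameters $(\blambda,\bmu,G)$ of \monmodele{}.

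First I would exploit the factorisation of the complete-data likelihood from Section \ref{Definition}:
\begin{equation*}
\mathcal{L}(R,\bZ^1,\bZ^2|\btheta) = \mathcal{L}(\bZ^1|\balpha)\,\mathcal{L}(\bZ^2|\bbeta)\,\mathcal{L}(R|\bZ^1,\bZ^2;\bpi,\blambda,\bmu,G),
\end{equation*}
and assume the prior $g(\btheta|m_{Q_1,Q_2})$ factorises accordingly (for instance Dirichlet priors on the two simplices and an independent, proper, smooth prior on $\Pi$). The complete-data integrated likelihood then splits into three independent integrals, over $\balpha$, over $\bbeta$, and over $(\bpi,\blambda,\bmu,G)$, which can be approximated separately.

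Next I would apply a Laplace approximation to each of these three integrals. For the integral in $\balpha$, the parameter lives on the $(Q_1-1)$-dimensional simplex and there are $n_1$ effective observations (the rows of $\bZ^1$), so standard BIC asymptotics give
\begin{equation*}
\log\!\int \mathcal{L}(\bZ^1|\balpha)g(\balpha)d\balpha = \max_{\balpha}\log\mathcal{L}(\bZ^1|\balpha) - \tfrac{Q_1-1}{2}\log(n_1) + O(1),
\end{equation*}
with the analogous penalty $\tfrac{Q_2-1}{2}\log(n_2)$ arising for $\bbeta$. For the third integral I would count the effective parameter dimension: $Q_1Q_2$ free entries in $\bpi$, $n_1-1$ free entries in $\blambda$ (since the identifiability constraint $\max_i\lambda_i=1$ removes one degree of freedom), $n_2-1$ for $\bmu$ by symmetry, and $1$ for $G$, totalling $Q_1Q_2+n_1+n_2-1$; the likelihood is built on $n_1n_2$ observations $R_{ij}$, so the Laplace penalty is $\tfrac{Q_1Q_2+n_1+n_2-1}{2}\log(n_1n_2)$.

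Summing the three approximations and evaluating at the hard clusterings $\widehat{\bZ^1},\widehat{\bZ^2}$ returned by the SEM of Section \ref{ssecInf} would produce the stated formula up to $O(1)$ terms that are dropped in the definition of ICL. The main obstacle is rigorously justifying the Laplace approximation for the third integral: the dimensions of $\blambda$ and $\bmu$ grow with $n_1$ and $n_2$, and the MAP sits against the $\max=1$ boundary, so classical fixed-dimension BIC theory does not apply off the shelf. A careful treatment would have to show that, conditionally on $\bZ^1,\bZ^2$ and for typical realisations of $R$, the Hessian of $-\log\mathcal{L}(R|\bZ^1,\bZ^2;\bpi,\blambda,\bmu,G)$ is of order $n_1n_2$ in every direction of the constrained parameter space, and that the quadratic expansion around the interior maximiser dominates the error term uniformly, much as in the asymptotic arguments of \cite{brault_estimation_2014} for the parameter-free LBM.
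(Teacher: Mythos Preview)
Your proposal is correct and follows essentially the same route as the paper's own proof: factorise the complete-data likelihood and the prior into the three blocks $(\balpha)$, $(\bbeta)$, $(\bpi,\blambda,\bmu,G)$, apply a BIC/Laplace approximation to each integral separately with the corresponding effective sample sizes $n_1$, $n_2$, $n_1n_2$, and sum. Your explicit dimension count $Q_1Q_2+(n_1-1)+(n_2-1)+1$ for the third block is exactly how the paper arrives at the penalty $\tfrac{Q_1Q_2+n_1+n_2-1}{2}\log(n_1n_2)$, and your caveat about the growing-dimension and boundary issues in that block is a legitimate technical concern that the paper simply does not address, treating the BIC step as a heuristic in the spirit of \cite{daudin_mixture_2008}.
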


\begin{proof}
The proof is similar to the proof in \cite{daudin_mixture_2008}, and is given in the supplementary material \ref{ICL}. 
\end{proof}
 
The algorithm starts with a number of group of $Q_1=Q_2=1$ and fits independently the CoOP-LBM for increasing value of $Q_1$ and $Q_2$, each time initializing with a clustering algorithm such as spectral, hierarchical or k-means. The ICL is calculated for each model. If the ICL keeps decreasing as $Q_1$ or $Q_2$ increases, the algorithm can either stop, or deepens the exploration with a split-merge procedure. In the second case, the algorithm will initialize again by either merging different combinations of the previously estimated clusterings, or by splitting them. At the end, the algorithm returns a list containing the estimated model for each value of $Q_1$ and $Q_2$, with the corresponding ICL, and provide the user with the pair $(Q_1,Q_2)$ which has the larger ICL value.

\section{Simulations}\label{simulation}
\subsection{Settings}

The following simulation study is fully reproducible and can be performed with any set of chosen parameters. The code is available on \url{https://anakokemre.github.io/CoOP-LBM/index.html}.
In this simulation study, we consider that $n_1 = n_2 = 100, Q_1 = Q_2 = 3, \balpha = \bbeta = (\frac{1}{3},\frac{1}{3},\frac{1}{3})$ and 

$$\bpi= \begin{bmatrix}
0.95 & 0.75 & 0.50 \\
0.75 & 0.50 & 0.50 \\
0.50 & 0.50 & 0.05 
\end{bmatrix}  .$$ Moreover, we assumed that $G$ takes values in $\{25,100,200,\dots,600\}$ and that $\blambda =\blambda^*/\max(\blambda^*)$ with $ \blambda^* \overset{iid}{\sim} Beta(0.3,1.5)$, $\bmu=\bmu^*/\max(\bmu^*)$ with $ \bmu^* \overset{iid}{\sim} Beta(0.3,1.5)$. 

The matrices $M,\ N$ and $R$ are simulated according to the \monmodele{} with the parameters given above.
The rows and columns of $R$ and $M$ where there is no observation (full rows/columns of $R$ is filled with 0) are discarded.  The observed support is $V_{i,j} = \mathbbm{1}\{R_{i,j}>0\}$.

For each value of $G$, $10$ simulations of $R$ are performed. On each simulation, an LBM is fitted on the corresponding binary matrix $V$ using a VEM algorithm implemented and our proposed algorithm is fitted on $R$. 
$\widehat{Z^1}$ and $\widehat{Z^2}$ are initialized with a hierarchical clustering. Then, the algorithm runs for an initial burn-in of 50 iterations and then an additional 50 iterations is used to provide the estimator.


The $Beta(0.3,1.5)$ distribution is skewed, there will be many values of $\lambda_i$ and $\mu_j$ close to $0$ and few close to $1$. In practice, this is what is observed in pollination networks, some species are very well sampled ($\lambda_i$ close to 1) and many of them are much less sampled ($\lambda_i$ close to 0). Consequently, the observed matrix will have a lot of missing data. For $G = 25$, at least $2/3$ of the support is missing in comparison with the full support, and for $G= 600$, almost half the support is missing.

\subsection{Results}

We verify that our algorithm is able to correctly estimate the parameters, and evaluate the performance of our model to recover the clusters, to detect missing interactions and to retrieve the values of several metrics computed with respect to the actual $M$ which is not observed in real situations  for varying sampling intensity $G$. \\

We remind that the bipartite degree-corrected SBM (DCSBM) \citep{karrerStochasticBlockmodelsCommunity2011,zhao_variational_2022} also provide us a clustering $Z^1,Z^2$ and estimated parameters $\Omega_{i,j} := v^1_i v^2_j \omega_{Z^1_i Z^2_j} $ such as 
\begin{equation}
    \mathbb{P}(R_{i,j}=r|Z_i^1=k,Z_j^2=l )= e^{-v^1_i v^2_j \omega_{k l} }\frac{(v^1_i v^2_j \omega_{k l} )^r}{r!} = e^{-\Omega_{i,j} }\frac{(\Omega_{i,j} )^r}{r!}
\end{equation}
where $v^1,v^2$ and $\omega$ have to satisfy certain constraints to be identifiable, and $\Omega_{i,j}$ is the expected value of the biadjacency matrix element $R_{i,j}$.

\subsubsection{Estimation of $\blambda,\bmu, G$}

Our simulations show that \monmodele{} is able to correctly estimate $\blambda$,$\mu$ and $G$. For different value of $G$, a RMSE has been calculated for $\blambda$ and $\bmu$ with the following formulas:

\begin{equation}
RMSE_{\blambda} = \sqrt{ \frac{1}{n_1}\sum_{i=1}^{n_1}(\lambda_i - \hat{\lambda}_i)^2},  \quad RMSE_{\bmu} =\sqrt{ \frac{1}{n_2}\sum_{j=1}^{n_2}(\mu_j - \hat{\mu}_j)^2 } .\end{equation}

In supplementary material, we show that the parameters are well retrieved by our estimation procedure.

\subsubsection{Recovery of clusterings}
The clustering on rows and columns estimated by the CoOP-LBM are assessed by computing the Adjusted Rand Index  (ARI) \citep{rand} scores to the original blocks that were used to simulate $M$, and compared with the results estimated by the LBM and the DCSBM. When the ARI score is equal to one, the two clustering are exactly the same (up to label switching). When the ARI score is close to 0, then the two clustering are totally different. An ARI score can be computed even if the two clusterings do not have the same number of groups, the score is then necessarily lower than $1$.

\begin{figure}[H]
\centering
\includegraphics[width=10cm]{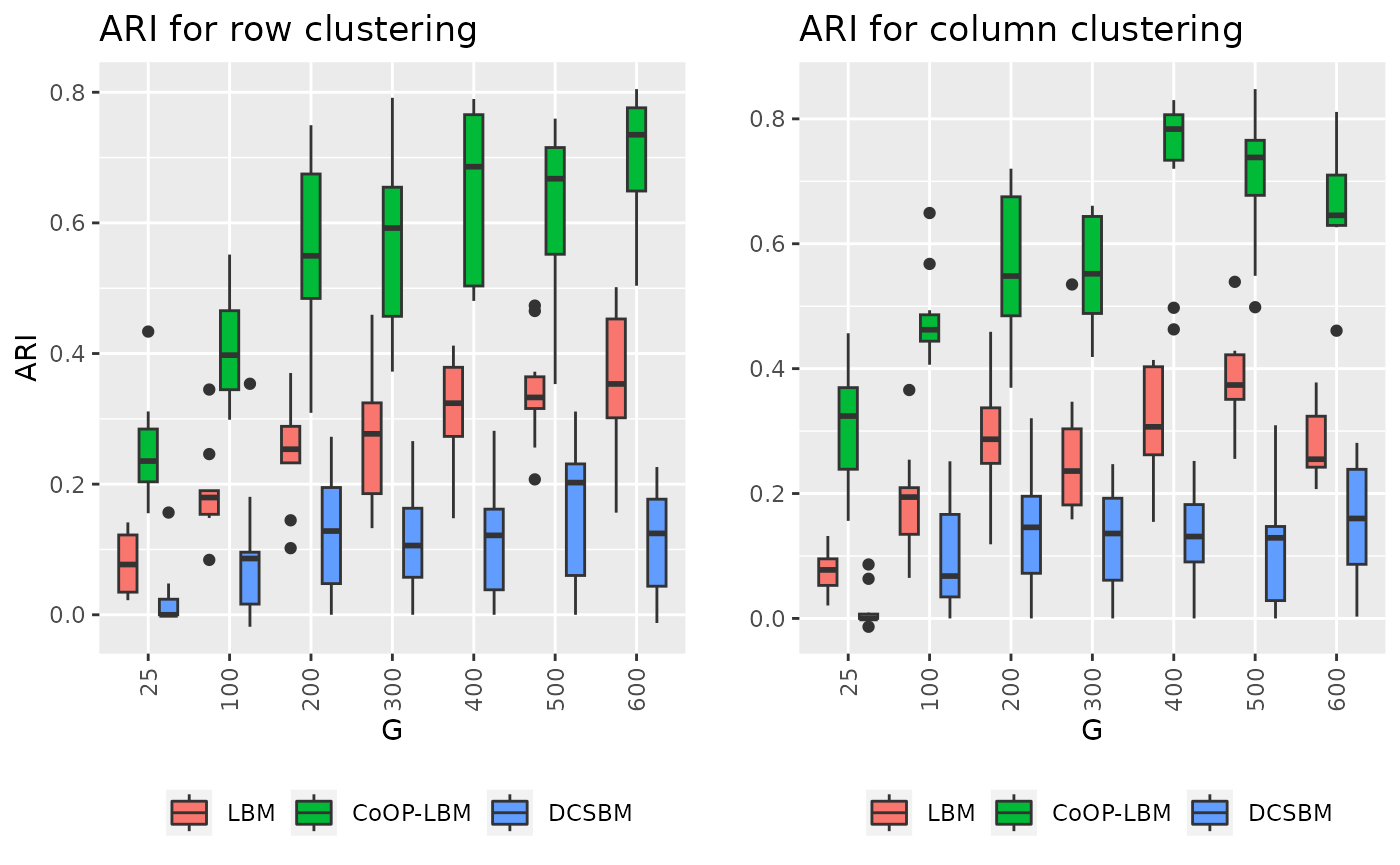}
\caption{ARI scores for rows and columns blocks when the number of groups is unknown}
\label{fig5}
\end{figure}

As we can see in \cref{fig5}, the \monmodele{} has systematically   a better ARI score in this setting than the LBM and the DCSBM, which tend to overestimate the number of groups compared to the CoOP-LBM as we explained in the introduction. The DCSBM also clusters the nodes according to the expected value of the Poisson parameters, which is not how the latent blocks are structured.
 Both LBM and DCSBM struggle to reach an ARI of $0.5$ for all values of $G$, whereas the \monmodele{} has an ARI score larger than $0.5$ when $G$ is larger than $250$. The LBM performs even worse when the number of groups is given (see supplementary material \ref{secsupplfig}), which shows that the structure retrieved by the LBM is clearly different from the one from $M$.
Even if the LBM estimates the correct number of groups, its ARI score is very low because the estimated structure is different. The \monmodele{} can have a better estimation of the clustering even if more than half of the interactions are missing. For LBM and CoOP-LBM, as $G$ is increasing, the ARI score increases because $\mathbb{P}(N_{i,j}=0)$ decreases. Additional results can be found in the supplementary material.

\subsubsection{Recovery of the support} 
The recovery of missing interactions is firstly assessed by calculating the Area Under the Curve of the ROC curve for missing interactions. The result will be compared with probabilities given by the LBM \citep{terry_finding_2020}.
We recall that with the \monmodele{}, the probability of having a missing an interaction 
when $R_{i,j}=0$ is \begin{equation}
\mathbb{P}(M_{i,j}=1|R_{i,j}=0,Z_{i}^1=k,Z^2_{j}=l)=\frac{\pi_{kl}e^{-\lambda_i  \mu_j G}}
{1-\pi_{kl}(1-e^{-\lambda_i\mu_j G})}\end{equation}
and can be estimated by replacing $\bZ^1,\bZ^2$ by the obtained clusterings $\widehat{\bZ}^1,\widehat{\bZ}^2$ and  $\bpi, \blambda,\bmu , G$ by their estimates.
This estimated probability tends to 0 when $\lambda_i\mu_j G$ increases, whereas it tends to $\pi_{kl}$ if $\lambda_i\mu_j G$ decreases. This latest limit is the estimator of the probability of a missing data 
given by \cite{terry_finding_2020} in the case of the classical LBM. For the DCSBM, we consider as an estimator $1-e^{-\Omega_{i,j}}$  which is the probability for Poisson random variable of parameter $\Omega_{i,j}$ to be greater or equal to $1$. In order to compare the results, we calculated the AUC of the ROC, which measures how much the models are capable of distinguishing between which interactions are missing and which are not. An AUC close to 1 describes a perfect separation, whereas an AUC close to 0.5 describes a model where the distinction is as good as a distinction made randomly.
\begin{figure}
\centering
\includegraphics[width=8.5cm]{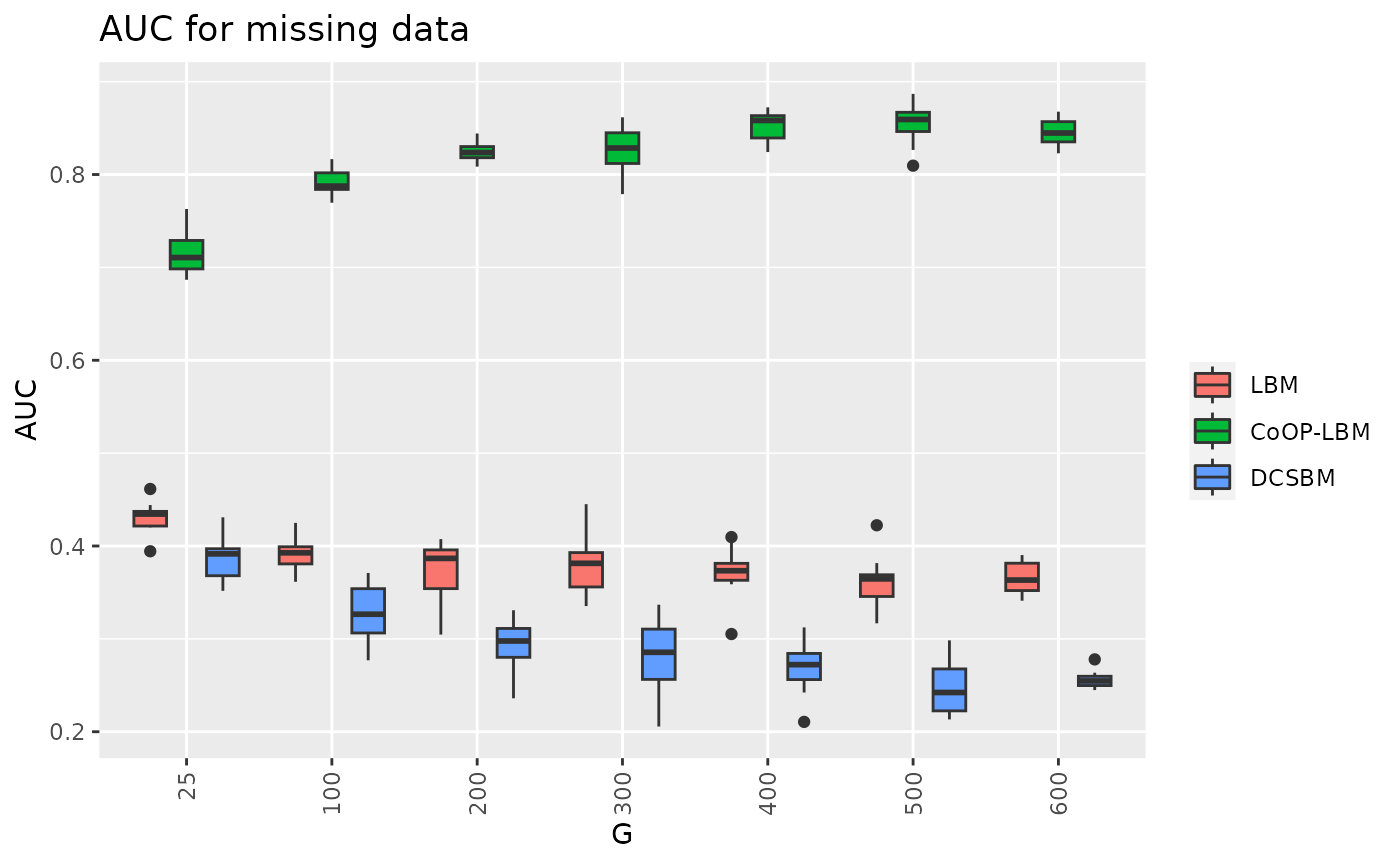}
\caption{AUC of the ROC curves for the prediction of missing data}
\label{fig6}
\end{figure}
It can be seen in
 \cref{fig6} that on average, the \monmodele{} gives a better estimation in our simulation than the LBM. The \monmodele{} even reached an AUC of 0.9 for $G=500$, and is on average equal to 0.824, whereas the maximum AUC reached by the LBM  and DCSBM was lower than 0.5. On \cref{fig7}, an example of the estimated probabilities can be seen. In the LBM that has been fit on the observed matrix, we can see large white bands in the middle where a lot of missing data has not been detected, while the same areas are reddish in the \monmodele{}. On the contrary, the area on the bottom right has more red in the LBM, compared to the  \monmodele{}, which is more accurate to identify where the interactions are missing or not.

\begin{figure}[H]
    \centering
    \includegraphics[width=0.8\textwidth]{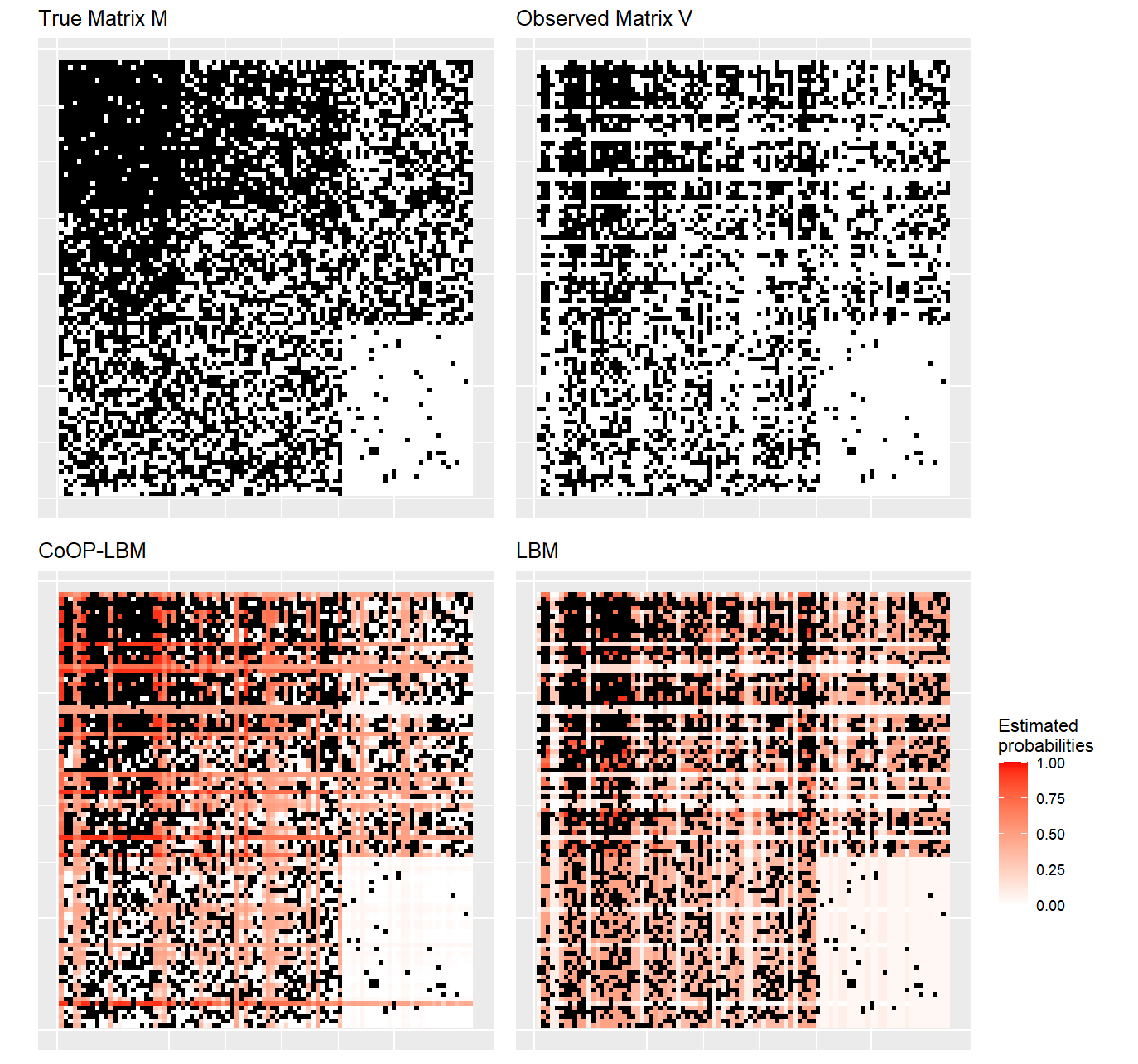}
    \caption{Estimated probabilities of missing interaction. On the top left, $M$ is the full matrix of interaction. On the top right,  $V$ is the observed matrix. On the bottom left, the estimated probabilities of missing interaction by CoOP-LBM are given with shades of red. On the bottom right, the estimated probabilities of missing interaction given by the LBM. All the matrices have been permuted according to the true clustering of $M$. Looking at the shades of red, we can see that CoOP-LBM has better retrieved the initial structure of the matrix than the LBM. }
    \label{fig7}
\end{figure}

\subsubsection{Sampling completeness and connectivity} 

In order to estimate a sampling completeness, we can use species richness estimator, such as the Chao 1 estimator \citep{chao_nonparametric_1984}. Its purpose is to estimate the total number of species in a population given the number of time each observed species has been counted. This methodology can be used to estimate the total number of interaction in the network, given the number of time each interaction has been observed. In our setting, we define the Chao 1 to estimate the total number of different interactions as 
\begin{equation}
\hat{S} = s + \frac{f_1(f_1-1)}{2(f_2+1)} \end{equation}
where $f_1$ is the number of interactions observed only once, $f_2$ is number of interactions observed only twice, and $s = \sum_{i,j}V_{i,j}$ is the total number of different interactions.  If $\hat{S}$ is close to $1$, then almost all the different interactions have been sampled, if $\hat{S}$ is close to $0$ then  a lot of interactions are missing in the sampling. It can be estimated thanks to the \texttt{vegan} R package. 
We can also use a coverage estimator \citep{chao_coverage-based_2012}, which represents the rate of total number of species in a population that belong to the species present in the sample. It can also be defined as the probability that the next species drawn has already been seen at least once before. Once again this methodology can be used to estimate the coverage of the network. This has been studied in \cite{terry_finding_2020} in order to estimate the number of missing interactions for each row of the observed matrix. To estimate the coverage at the whole network level, we will use
\begin{equation}
\hat{C}=1-\frac{f_1}{n}\frac{f_1(n-1)}{f_1(n-1)+2(f_2+1)}\end{equation} where $n = \sum_{i,j}R_{i,j}$.
 Knowing the coverage of the sampling can help estimate the true connectivity of the network. 

In this simulation study, we wish to compare the estimation given by the \monmodele{} and the estimation of total connectivity adjusted with the Chao 1 estimator, and the Chao coverage estimator and the Poisson degree-corrected SBM (DCSBM). We define the corrected connectivity estimators as

 \begin{align}
     &C_{Chao\; 1} = \frac{\hat{S}}{n_1 n_2}, \quad C_{Chao\; coverage} = \frac{n}{n_1 n_2}\times\frac{1}{\hat{C}},\quad C_{CoOP} = \frac{1}{n_1 n_2}\sum_{i=1}^{n_1}\sum_{j=1}^{n_2}\mathbb{P}(M_{i,j}=1|R_{i,j})
 \end{align}

 \begin{align}
     C_{DCSBM} =  \frac{1}{n_1 n_2}\sum_{i=1}^{n_1}\sum_{j=1}^{n_2}(1 - e^{-\Omega_{i,j}})
 \end{align}
  
\begin{figure}[H]
\centering
\includegraphics[width=0.65\textwidth]{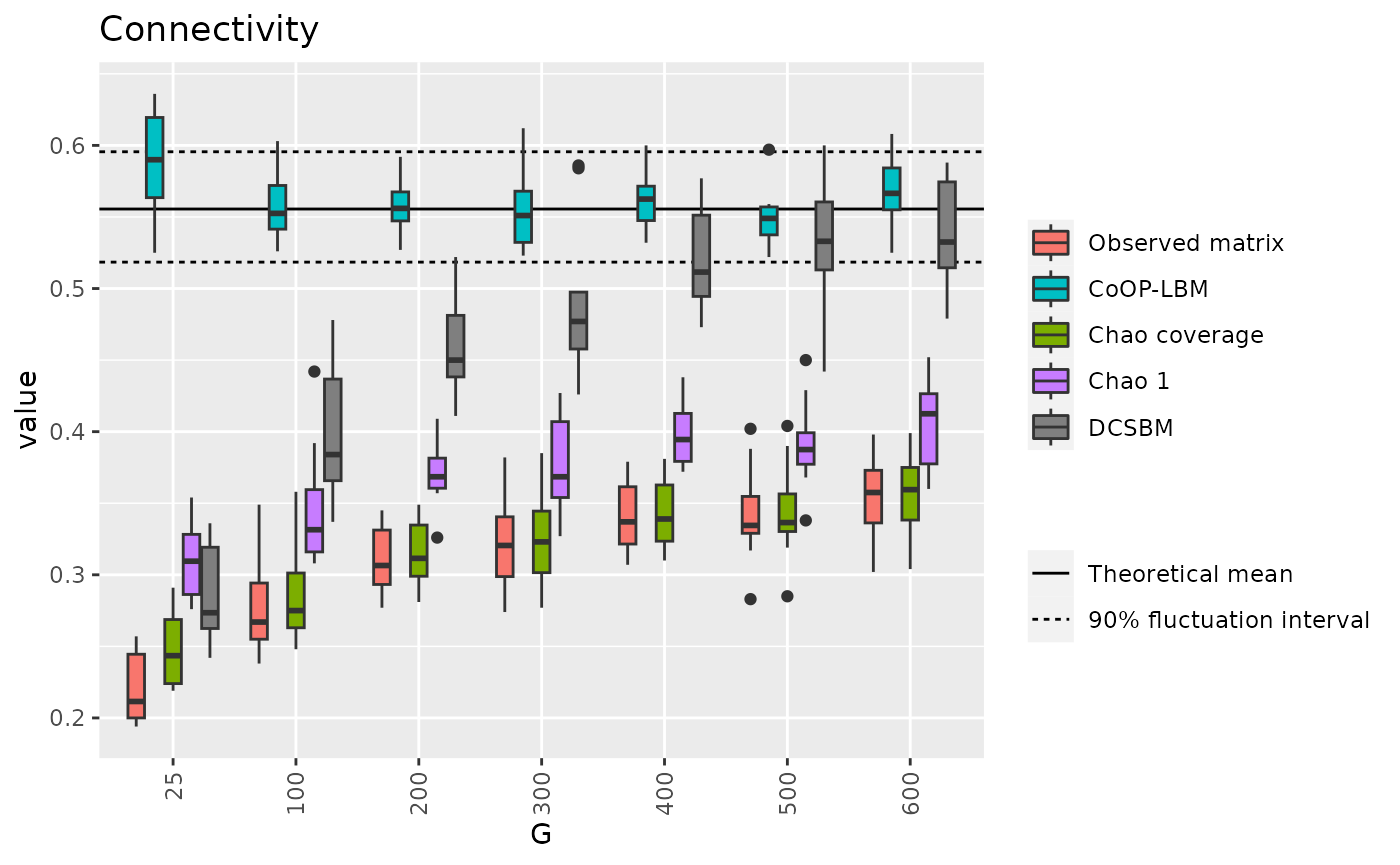}
\caption{Comparison of estimated connectivity on the observed matrix, and with the Chao-adjusted estimator, the CoOP-LBM estimator and the DCSBM}.
\label{fig11}
\end{figure}

As $M$ is an LBM, we can know its theoretical average connectivity which is given by $\alpha^\top \pi \beta$. In practice, the value of the connectivity fluctuate slightly around the average. In \cref{fig11} the average connectivity of $M$ is represented with a solid line while the dotted lines delimit the 0.05-quantile and 0.95-quantile of the different connectivity measured on $M$. We can see that \monmodele{} is much better to recover the connectivity compared to using both Chao estimators in this case. This is due to the fact that the size of the observed matrix $V$ does not increase as much with $G$ as the size of the sampling. Consequently, the Chao estimators quickly converge to $1$ with increasing $G$, and does not correct adequately  the connectivity. In this simulation setting, the DCSBM corrects adequatly the connectivity only for values of $G$ higher than 400.

\subsubsection{Nestedness, modularity}
Several metrics are used in the ecology community to describe networks, such as nestedness  \citep{terry_finding_2020,Lara-Romero-2019,de_manincor_how_2020,fortuna_nestedness_2010}, or modularity  \citep{terry_finding_2020,de_manincor_how_2020,guimera_missing_2009,fortuna_nestedness_2010}.

In the ecological terminology, a specialist is a species that has very few interactions, contrary to the generalist that has a lot more. 
The purpose of nestedness is to describe in an interaction network how much specialists interact with proper subsets of the species that generalists interact with.  Several metrics have been develop to define or estimate this nestedness. Here we will use the NODF metric from \cite{NODF}. A network is perfectly nested if its nestedness is close to $1$, and is not nested if its nestedness reaches 0. Another tool to study the structure of a network is the modularity, which measures the strength of division of a network into modules. Networks with high modularity have dense connections between the species within modules but sparse connections between species from different modules. Modularity also ranges between 0 and 1.

The nestedness and modularity will be applied on the initial matrix $M$ and on the observed matrix $V$. Then, the matrix $V$ will be completed with three different methods to see if an imputation can correct these metrics. Let $n_{miss}\in \mathbb{N}$.
\begin{enumerate}
    \item Among the coordinates $(i,j)$ where $V_{i,j}=0$, sample without replacement $n_{miss}$  coordinates with probability proportional to $\widehat{\pi_{kl}}$ (probability of connection in the cluster) which has been estimated by the LBM, or proportional to $1-e^{-\widehat{\Omega}_{i,j}} $ which has been estimated by the DCSBM. Substitute the $0$ at these sampled coordinates with $1$.

    \item For all $(i,j)$ such that $V_{i,j}=0$, draw a Bernoulli random variable of probability  ${\mathbb{P}(M_{i,j}=1|R_{i,j}=0)}$  which has been estimated by the \monmodele{}.

    \item  Among the coordinates $(i,j)$ where $V_{i,j}=0$, sample uniformly without replacement $n_{miss}$ coordinates. Substitute the $0$ at these sampled coordinates with $1$.
\end{enumerate}

$n_{miss}$ is an estimation of the number of interaction missed. It could have been provided by exterior experts or other similar studies. To give a slight advantage to method 1 and 3, we suppose that we have access to an exact estimation of ${n_{miss} = \sum_{i,j}M_{i,j} - \sum_{i,j}V_{i,j}}$, the corresponding estimator for the LBM will be called the oracle LBM. The nestedness (NODF) and modularity are both calculated with the \texttt{bipartite} package.

\begin{figure}[H]
\centering
\includegraphics[width=\textwidth]
{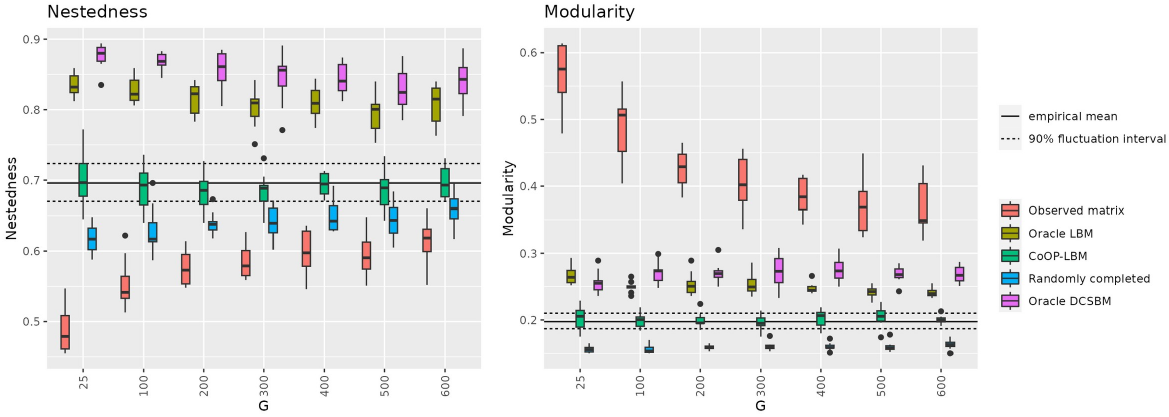}\caption{Estimation of Nestedness (left) and modularity (right)} 
\label{fig12}
\end{figure}

As we can see in \cref{fig12}, \monmodele{} is able to correct the estimation of nestedness and modularity, without having the precise number of missing interactions. Meanwhile, the LBM and the DCSBM perform even worse than the random completion to estimate 
the nestedness, and do no better 
than the random completion to estimate modularity.

\section{Application on real data}
\label{secappdata}
\subsection{Application on the network from Olesen \textit{et al.}, 2002}
  \begin{figure}[H]
    
    \begin{minipage}{.3\textwidth}
    \centering
    \includegraphics[width=1.2\textwidth]{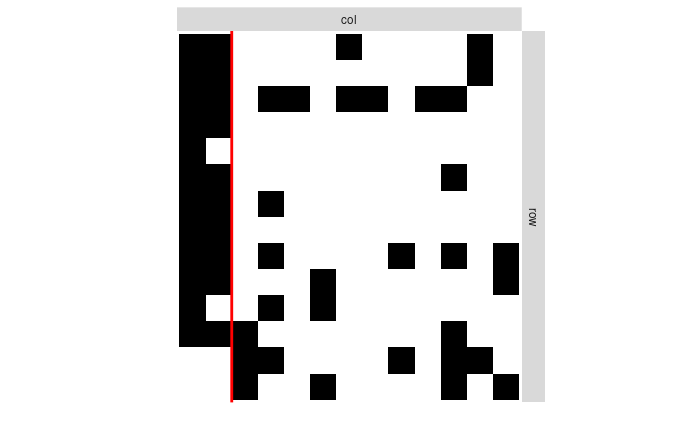}
    
    Estimated LBM
    \end{minipage}%
    \begin{minipage}{.3\textwidth}
    \centering
    \includegraphics[width=1.2\textwidth]{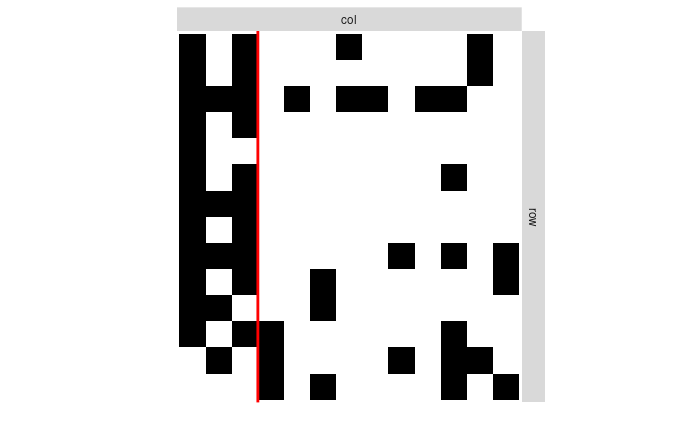}
    
    Estimated CoOP-LBM
    \end{minipage}
     \begin{minipage}{.3\textwidth}
    \centering
    \includegraphics[width=0.82\textwidth]{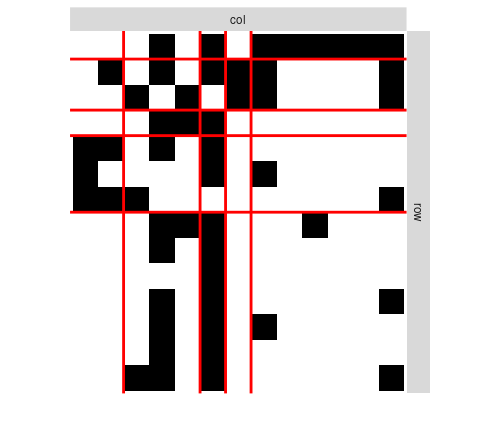}
    
    Estimated DCSBM
    \end{minipage}
     \caption{Difference between the estimated LBM and CoOP-LBM for the network of \cite{olesen_invasion_2002}, the only difference is observed for the insect species \textit{Lycaenidae pirithous}.} 
     \label{Olesen_graph}
    \end{figure}

\cite{dore_relative_2021} have provided a dataset of 299 networks, that have been aggregated into 123 networks according to location. Only 70 of these aggregated networks are weighted networks.
Looking at this set of 70 quantitative pollination networks, two networks from \cite{olesen_invasion_2002} are the best sampled according to the sampling effort criterion, which has been calculated as the number of person-hour spent sampling divided by the number of all possible interactions, and has a Chao coverage estimator equal to 0.995 which is very close to a perfect coverage. The network from \cite{olesen_invasion_2002} has been sampled in the Mauritian Ile aux Aigrettes, where 14 species of plants and 13 species of insect have been observed. The network contains 42 different interactions, sampled 1395 times in total, and is small enough to be easily interpreted.
\begin{figure}[H]
\centering
\includegraphics[width=9.5cm]{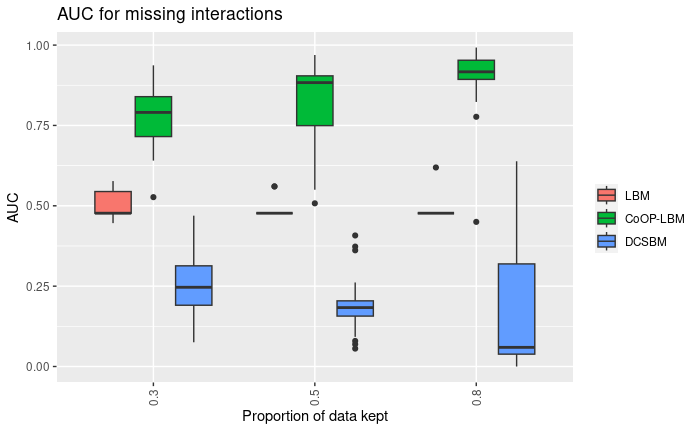}
\caption{AUC of the estimation of probabilities of missing interactions for the network of \cite{olesen_invasion_2002}. }
\label{fig14}
\end{figure}
LBM, DCSBM and CoOP-LBM have been fitted on this network. As we can see in \cref{Olesen_graph}, LBM and CoOP-LBM methods categorize the plants into the same group, and the insect into two groups, which can be interpreted as highly and weakly connected. Unlike the LBM, CoOP-LBM considers that the butterfly species \textit{Lycaenidae pirithous} is among the highly connected insects. This could be due to the fact that even if \textit{Lycaenidae pirithous} has been observed only 7 times, it was present on 5 different plant species.  According to the CoOP-LBM, only half of the interactions of \textit{Lycaenidae pirithous} have been sampled, whereas all the other species of plants and insects have a coverage close to 100\%. A longer observation of \textit{Lycaenidae pirithous} could lead to discovering new interactions, which could justify why \textit{Lycaenidae pirithous} could be in the highly connected group.

To validate the calculated probabilities and coverage, we decided to sub-sample all interactions as done in \cite{terry_finding_2020}, by considering the empirical interaction frequency to be the true network to define a multinomial distribution. We subsampled the empirical network 30 times, each time keeping between 30\% and 80\% of the original sample size, which is the number of total observations.

As we can see in \cref{fig14}, the LBM and the DCSBM struggles to achieve an AUC higher than 0.5 in this situation. This situation is similar to the simulation done in \cref{fig7} where the LBM considers that the probability of having missing interaction is higher in the group that has been observed as highly connected. 

\subsection{Application on the data set of Doré, Fontaine \& Thébault, 2021}

In the 70 weighted networks, 2256 different species of plants and 5665 species of insects are observed  in total. The matrices differ in size, from $10\times 12$ to $112 \times 845$. We fit the LBM, the CoOP-LBM, and the DCSBM on each of these 70 networks. For the DCSBM we limit the range exploration of models by constraining $Q_1+ Q_2 \leq 20$.  
It is difficult to validate whether our method can retrieve the missing interactions because the validation process should assume that the observed matrix before the sub-sampling is the full interaction matrix, which is probably not the case. The method will discover interactions that have not been observed and will penalize the AUC score because of the false positive. We show on \url{https://anakokemre.github.io/CoOP-LBM/articles/Validation_on_observed_networks.html} how the validation process can disadvantage the CoOP-LBM.
In the previous section, the studied network 
was obtained with a large sampling effort and was rather small so we can assume that it was close to a complete network which explains the good performance
of the missing interaction prediction.

\begin{figure}[b]
    \begin{minipage}{.33\textwidth}
    \centering
    \includegraphics[width=1\textwidth]{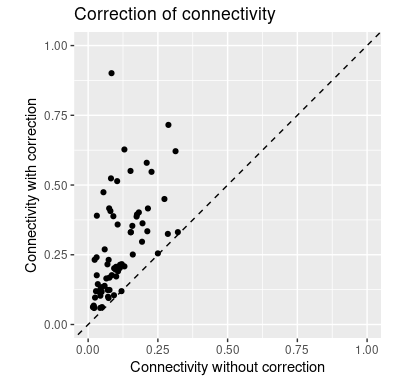}

    \end{minipage}%
    \begin{minipage}{.33\textwidth}
    \centering
    \includegraphics[width=1\textwidth]{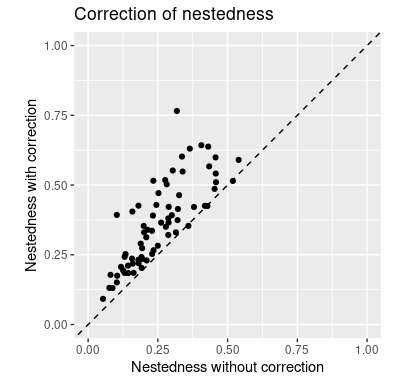}

    \end{minipage}
    \begin{minipage}{.33\textwidth}
    \centering
    \includegraphics[width=1\textwidth]{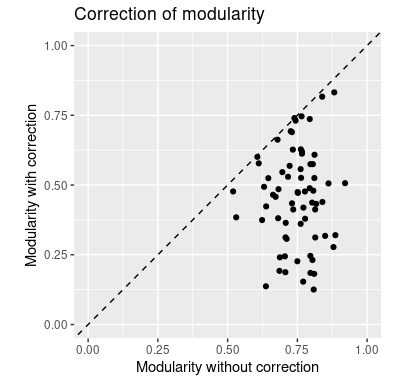}

    \end{minipage}
    
    \caption{Connectivity, nestedness and modularity on the observed matrix, and on the CoOP-LBM corrected matrix, estimated on the network set from \cite{dore_relative_2021}}

    \label{fig15}
    \end{figure}

The main difference between the estimated structure of CoOP-LBM compared to the other method is that the number of estimated groups by CoOP-LBM is lower. Compared to the LBM (table \ref{table2}) 
, it  will merge the groups originated by the sampling. It is also much lower than the number of estimated groups by the Poisson DCSBM (table \ref{table3}), which clusters the nodes according to the expected value of the Poisson parameter. As the number of observations can variate greatly, notably because of abundance or detectability, the DCSBM tends to have a fit a model with a great number of groups. 

\begin{table}[H]
\caption{Estimated number of groups for plants (left) and insects (right) on set of networks from \cite{dore_relative_2021} estimated by CoOP-LBM and LBM. For example, the $18$ in the left table should be read as "there are 18 networks where the LBM estimated the number of groups for plants as 2, while the CoOP-LBM estimated it as 1".\label{table2}
}

\scalebox{0.8}{
\begin{tabular}{cc|ccccc|c}
\cline{3-7}
\textbf{} &
  &
  \multicolumn{5}{c|}{LBM} &
  \\ \cline{2-8} 
 & \multicolumn{1}{|c|}{\cellcolor[HTML]{FFFFC7}n of groups} &
  \multicolumn{1}{c|}{\cellcolor[HTML]{FFFFC7}1} &
  \multicolumn{1}{c|}{\cellcolor[HTML]{FFFFC7}2} &
  \multicolumn{1}{c|}{\cellcolor[HTML]{FFFFC7}3} &
  \multicolumn{1}{c|}{\cellcolor[HTML]{FFFFC7}4} &
  \cellcolor[HTML]{FFFFC7}5 &
  \multicolumn{1}{c|}{Total} \\ \hline
\multicolumn{1}{|c|}{} &
  \cellcolor[HTML]{FFFFC7}{\color[HTML]{333333} 1} &
  \multicolumn{1}{c|}{2} &
  \multicolumn{1}{c|}{18} &
  \multicolumn{1}{c|}{9} &
  \multicolumn{1}{c|}{1} &
  0 &
  \multicolumn{1}{c|}{30} \\ \cline{2-8} 
\multicolumn{1}{|c|}{} &
  \cellcolor[HTML]{FFFFC7}{\color[HTML]{333333} 2} &
  \multicolumn{1}{c|}{0} &
  \multicolumn{1}{c|}{16} &
  \multicolumn{1}{c|}{14} &
  \multicolumn{1}{c|}{3} &
  0 &
  \multicolumn{1}{c|}{33} \\ \cline{2-8} 
\multicolumn{1}{|c|}{} &
  \cellcolor[HTML]{FFFFC7}{\color[HTML]{333333} 3} &
  \multicolumn{1}{c|}{0} &
  \multicolumn{1}{c|}{0} &
  \multicolumn{1}{c|}{3} &
  \multicolumn{1}{c|}{2} &
  0 &
  \multicolumn{1}{c|}{5} \\ \cline{2-8} 
\multicolumn{1}{|c|}{} &
  \cellcolor[HTML]{FFFFC7}{\color[HTML]{333333} 4} &
  \multicolumn{1}{c|}{0} &
  \multicolumn{1}{c|}{0} &
  \multicolumn{1}{c|}{0} &
  \multicolumn{1}{c|}{0} &
  0 &
  \multicolumn{1}{c|}{0} \\ \cline{2-8} 
\multicolumn{1}{|c|}{\multirow{-5}{*}{\parbox{1cm}{CoOP-LBM}}} &
  \cellcolor[HTML]{FFFFC7}{\color[HTML]{333333} 5} &
  \multicolumn{1}{c|}{0} &
  \multicolumn{1}{c|}{0} &
  \multicolumn{1}{c|}{0} &
  \multicolumn{1}{c|}{1} &
  1 &
  \multicolumn{1}{c|}{2} \\ \hline
\multicolumn{1}{c|}{} &
  Total &
  \multicolumn{1}{c|}{2} &
  \multicolumn{1}{c|}{34} &
  \multicolumn{1}{c|}{26} &
  \multicolumn{1}{c|}{7} &
  1 &
  \multicolumn{1}{c|}{} \\ \cline{2-8} 
\end{tabular}
}
\scalebox{0.8}{
\begin{tabular}{cc|ccccc|c}
\cline{3-7}
\textbf{} &
  &
  \multicolumn{5}{c|}{LBM} &
  \\ \cline{2-8} 
 &
  \multicolumn{1}{|c|}{\cellcolor[HTML]{FFFFC7}n of groups}&
  \multicolumn{1}{c|}{\cellcolor[HTML]{FFFFC7}1} &
  \multicolumn{1}{c|}{\cellcolor[HTML]{FFFFC7}2} &
  \multicolumn{1}{c|}{\cellcolor[HTML]{FFFFC7}3} &
  \multicolumn{1}{c|}{\cellcolor[HTML]{FFFFC7}4} &
  \cellcolor[HTML]{FFFFC7}5 &
  \multicolumn{1}{c|}{Total} \\ \hline
\multicolumn{1}{|c|}{} &
  \cellcolor[HTML]{FFFFC7}{\color[HTML]{333333} 1} &
  \multicolumn{1}{c|}{18} &
  \multicolumn{1}{c|}{24} &
  \multicolumn{1}{c|}{5} &
  \multicolumn{1}{c|}{0} &
  1 &
  \multicolumn{1}{c|}{48} \\ \cline{2-8} 
\multicolumn{1}{|c|}{\parbox{1cm}{CoOP-LBM}} &
  \cellcolor[HTML]{FFFFC7}{\color[HTML]{333333} 2} &
  \multicolumn{1}{c|}{0} &
  \multicolumn{1}{c|}{15} &
  \multicolumn{1}{c|}{5} &
  \multicolumn{1}{c|}{1} &
  0 &
  \multicolumn{1}{c|}{21} \\ \cline{2-8} 
\multicolumn{1}{|c|}{} &
  \cellcolor[HTML]{FFFFC7}{\color[HTML]{333333} 3} &
  \multicolumn{1}{c|}{0} &
  \multicolumn{1}{c|}{1} &
  \multicolumn{1}{c|}{0} &
  \multicolumn{1}{c|}{0} &
  0 &
  \multicolumn{1}{c|}{1} \\ \hline
\multicolumn{1}{c|}{} &
  Total &
  \multicolumn{1}{c|}{18} &
  \multicolumn{1}{c|}{40} &
  \multicolumn{1}{c|}{10} &
  \multicolumn{1}{c|}{1} &
  1 &
  \multicolumn{1}{c|}{} \\ \cline{2-8} 
\end{tabular}}
\end{table}

\begin{table}[H]
\caption{Estimated number of groups for plants (left) and insects (right) on set of networks from \cite{dore_relative_2021} estimated by CoOP-LBM and DCSBM. For example, the $9$ in the left table should be read as "there are 9 networks where the DCSBM estimated the number of groups for plants as 3, while the CoOP-LBM estimated it as 1".}\label{table3}
\scalebox{0.6}{
\begin{tabular}{cc|cccccccccc|c}
\cline{3-12}
 &
   &
  \multicolumn{10}{c|}{DCSBM} &
   \\ \cline{2-13} 
\multicolumn{1}{c|}{} &
  \cellcolor[HTML]{FFFFC7}n of groups &
  \multicolumn{1}{c|}{\cellcolor[HTML]{FFFFC7}1} &
  \multicolumn{1}{c|}{\cellcolor[HTML]{FFFFC7}2} &
  \multicolumn{1}{c|}{\cellcolor[HTML]{FFFFC7}3} &
  \multicolumn{1}{c|}{\cellcolor[HTML]{FFFFC7}4} &
  \multicolumn{1}{c|}{\cellcolor[HTML]{FFFFC7}5} &
  \multicolumn{1}{c|}{\cellcolor[HTML]{FFFFC7}6} &
  \multicolumn{1}{c|}{\cellcolor[HTML]{FFFFC7}7} &
  \multicolumn{1}{c|}{\cellcolor[HTML]{FFFFC7}8} &
  \multicolumn{1}{c|}{\cellcolor[HTML]{FFFFC7}10} &
  \cellcolor[HTML]{FFFFC7}13 & \multicolumn{1}{c|}{Total}
   \\ \hline
\multicolumn{1}{|c|}{\cellcolor[HTML]{FFFFFF}} &
  \cellcolor[HTML]{FFFFC7}1 &
  \multicolumn{1}{c|}{10} &
  \multicolumn{1}{c|}{4} &
  \multicolumn{1}{c|}{9} &
  \multicolumn{1}{c|}{2} &
  \multicolumn{1}{c|}{3} &
  \multicolumn{1}{c|}{0} &
  \multicolumn{1}{c|}{2} &
  \multicolumn{1}{c|}{0} &
  \multicolumn{1}{c|}{0} &
  0 &
   \multicolumn{1}{c|}{30} \\ \cline{2-13} 
\multicolumn{1}{|c|}{\cellcolor[HTML]{FFFFFF}} &
  \cellcolor[HTML]{FFFFC7}2 &
  \multicolumn{1}{c|}{4} &
  \multicolumn{1}{c|}{4} &
  \multicolumn{1}{c|}{2} &
  \multicolumn{1}{c|}{2} &
  \multicolumn{1}{c|}{5} &
  \multicolumn{1}{c|}{5} &
  \multicolumn{1}{c|}{5} &
  \multicolumn{1}{c|}{5} &
  \multicolumn{1}{c|}{0} &
  1 &
   \multicolumn{1}{c|}{33} \\ \cline{2-13} 
\multicolumn{1}{|c|}{\cellcolor[HTML]{FFFFFF}} &
  \cellcolor[HTML]{FFFFC7}3 &
  \multicolumn{1}{c|}{2} &
  \multicolumn{1}{c|}{0} &
  \multicolumn{1}{c|}{0} &
  \multicolumn{1}{c|}{0} &
  \multicolumn{1}{c|}{1} &
  \multicolumn{1}{c|}{1} &
  \multicolumn{1}{c|}{1} &
  \multicolumn{1}{c|}{0} &
  \multicolumn{1}{c|}{0} &
  0 &
   \multicolumn{1}{c|}{5} \\ \cline{2-13} 
\multicolumn{1}{|c|}{\cellcolor[HTML]{FFFFFF}} &
  \cellcolor[HTML]{FFFFC7}4 &
  \multicolumn{1}{c|}{0} &
  \multicolumn{1}{c|}{0} &
  \multicolumn{1}{c|}{0} &
  \multicolumn{1}{c|}{0} &
  \multicolumn{1}{c|}{0} &
  \multicolumn{1}{c|}{0} &
  \multicolumn{1}{c|}{0} &
  \multicolumn{1}{c|}{0} &
  \multicolumn{1}{c|}{0} &
  0 &
   \multicolumn{1}{c|}{0} \\ \cline{2-13} 
\multicolumn{1}{|c|}{\multirow{-5}{*}{\cellcolor[HTML]{FFFFFF}\begin{tabular}[c]{@{}c@{}}CoOP-\\ LBM\end{tabular}}} &
  \cellcolor[HTML]{FFFFC7}5 &
  \multicolumn{1}{c|}{0} &
  \multicolumn{1}{c|}{0} &
  \multicolumn{1}{c|}{0} &
  \multicolumn{1}{c|}{0} &
  \multicolumn{1}{c|}{0} &
  \multicolumn{1}{c|}{0} &
  \multicolumn{1}{c|}{0} &
  \multicolumn{1}{c|}{1} &
  \multicolumn{1}{c|}{1} &
  0 &
\multicolumn{1}{c|}{2} \\ \hline
\multicolumn{1}{c|}{} &
  Total &
  \multicolumn{1}{c|}{16} &
  \multicolumn{1}{c|}{8} &
  \multicolumn{1}{c|}{11} &
  \multicolumn{1}{c|}{4} &
  \multicolumn{1}{c|}{9} &
  \multicolumn{1}{c|}{6} &
  \multicolumn{1}{c|}{8} &
  \multicolumn{1}{c|}{6} &
  \multicolumn{1}{c|}{1} &
  1 &
   \multicolumn{1}{c|}{} \\ \cline{2-13} 
\end{tabular}
}
\scalebox{0.6}{
\begin{tabular}{cc|ccccccccclc|c}
\cline{3-13}
 &
   &
  \multicolumn{11}{c|}{DCSBM} &
   \\ \cline{2-14} 
\multicolumn{1}{c|}{} &
  \cellcolor[HTML]{FFFFC7}n of groups &
  \multicolumn{1}{c|}{\cellcolor[HTML]{FFFFC7}1} &
  \multicolumn{1}{c|}{\cellcolor[HTML]{FFFFC7}2} &
  \multicolumn{1}{c|}{\cellcolor[HTML]{FFFFC7}3} &
  \multicolumn{1}{c|}{\cellcolor[HTML]{FFFFC7}4} &
  \multicolumn{1}{c|}{\cellcolor[HTML]{FFFFC7}5} &
  \multicolumn{1}{c|}{\cellcolor[HTML]{FFFFC7}6} &
  \multicolumn{1}{c|}{\cellcolor[HTML]{FFFFC7}7} &
  \multicolumn{1}{c|}{\cellcolor[HTML]{FFFFC7}8} &
  \multicolumn{1}{c|}{\cellcolor[HTML]{FFFFC7}9} &
  \multicolumn{1}{l|}{\cellcolor[HTML]{FFFFC7}10} &
  \cellcolor[HTML]{FFFFC7}12 &
  \multicolumn{1}{c|}{Total} \\ \hline
\multicolumn{1}{|c|}{\cellcolor[HTML]{FFFFFF}} &
  \cellcolor[HTML]{FFFFC7}1 &
  \multicolumn{1}{c|}{\cellcolor[HTML]{FFFFFF}14} &
  \multicolumn{1}{c|}{7} &
  \multicolumn{1}{c|}{10} &
  \multicolumn{1}{c|}{4} &
  \multicolumn{1}{c|}{4} &
  \multicolumn{1}{c|}{4} &
  \multicolumn{1}{c|}{2} &
  \multicolumn{1}{c|}{1} &
  \multicolumn{1}{c|}{2} &
  \multicolumn{1}{l|}{0} &
  0 &
  \multicolumn{1}{c|}{48} \\ \cline{2-14} 
\multicolumn{1}{|c|}{\cellcolor[HTML]{FFFFFF}} &
  \cellcolor[HTML]{FFFFC7}2 &
  \multicolumn{1}{c|}{2} &
  \multicolumn{1}{c|}{1} &
  \multicolumn{1}{c|}{1} &
  \multicolumn{1}{c|}{1} &
  \multicolumn{1}{c|}{1} &
  \multicolumn{1}{c|}{7} &
  \multicolumn{1}{c|}{2} &
  \multicolumn{1}{c|}{1} &
  \multicolumn{1}{c|}{1} &
  \multicolumn{1}{l|}{2} &
  2 &
  \multicolumn{1}{c|}{21} \\ \cline{2-14} 
\multicolumn{1}{|c|}{\multirow{-3}{*}{\cellcolor[HTML]{FFFFFF}\begin{tabular}[c]{@{}c@{}}CoOP-\\ LBM\end{tabular}}} &
  \cellcolor[HTML]{FFFFC7}3 &
  \multicolumn{1}{c|}{0} &
  \multicolumn{1}{c|}{0} &
  \multicolumn{1}{c|}{0} &
  \multicolumn{1}{c|}{0} &
  \multicolumn{1}{c|}{1} &
  \multicolumn{1}{c|}{0} &
  \multicolumn{1}{c|}{0} &
  \multicolumn{1}{c|}{0} &
  \multicolumn{1}{c|}{0} &
  \multicolumn{1}{l|}{0} &
  0 &
  \multicolumn{1}{c|}{1} \\ \hline
\multicolumn{1}{c|}{} &
  Total &
  \multicolumn{1}{c|}{16} &
  \multicolumn{1}{c|}{8} &
  \multicolumn{1}{c|}{11} &
  \multicolumn{1}{c|}{5} &
  \multicolumn{1}{c|}{6} &
  \multicolumn{1}{c|}{11} &
  \multicolumn{1}{c|}{4} &
  \multicolumn{1}{c|}{2} &
  \multicolumn{1}{c|}{3} &
  \multicolumn{1}{l|}{2} &
  2 &
  \multicolumn{1}{c|}{} \\ \cline{2-14} 
\end{tabular}
}
\end{table}

As the structure appears different when we fit CoOP-LBM, metrics describing the structure of the network will need a correction. We correct the estimation of metrics as we have done in the previous part, by completing the matrix of interaction with Bernoulli's variable with probabilities ${\mathbb{P}(M_{i,j}=1|R_{i,j}=0)}$.
 As we can see in \cref{fig15}, correcting the estimation increases the connectivity and the nestedness, but decreases the modularity.

\section{Discussion}
In this study, we presented CoOP-LBM, a model able to take into account the counting data of species interactions in order to correct the estimation of the structure of an ecological network. We also provide proof of identifiability of this model, and an original algorithm to estimate the parameters by simulating the missing interactions in the network. Our method performs better than the LBM on simulated data and on real data for finding the structure of an under-sampled network. Most of the time, CoOP-LBM estimated less groups for both insects and plants than the LBM or the DCSBM. This could suggest that a lot of groups observed in ecological networks could be a consequence of the sampling itself.  Further experiments have been performed on simulated data to test the robustness of our algorithm in different settings, for example with negative binomial distribution instead of Poisson distribution, or cases where the sampling efforts $(\lambda_i,\mu_j,G)$ may depend on some latent blocks. These experiments are available on the example vignettes of the R package CoOP-LBM at \url{https://anakokemre.github.io/CoOP-LBM/index.html}, and show that our algorithm is still able to recover well the true underlying structure.

As a byproduct, our methodology calculates the probability of having a missing interaction when no interaction has been observed. This allows correcting the connectivity of the model, and simulating the missing links to correct other metrics. Our method shows higher AUC values than the LBM and the DCSBM on the network from \cite{olesen_invasion_2002}. 
To further confirm the efficiency of our algorithm, the missing interactions estimated as highly probable by our method could be confirmed in laboratory tests, as done with genetic interaction networks. 

However, there is still room for improvement: for example, recording the time at which interactions are observed may be helpful to determine the sampling progress. Timed observations can be used to build an accumulation curve of new interactions, which should allow estimating the asymptotic number of interactions that could be observed. This estimation could provide additional information on the number of missing interactions in the network. Recording the sampling effort per abundance could provide additional information on the number of missing interactions in the network.
This study also neglects the effect of preferences and supposes that the counting data only comes from a product rule between the sampling efforts of pollinators and plants. Traits such as flower shape or insect body mass, which are important for plant-pollinator interactions \citep{chamberlain2014traits}, could also be used to further improve the estimation of clusters by using them as covariates in the model. These covariates could influence both the probabilities of connection and the Poisson distribution of frequencies. Another possibility to improve the algorithm is to take into account phylogeny. Two closely related species could have a higher probability of sharing the same interactions \citep{chamberlain2014traits}, or to be in the same group. All this exterior knowledge from experts could lead to a better reconstruction of the network. 

\section{Acknowledgments}
This work was partially supported
by the grant ANR-18-CE02-0010-01 of the French National Research Agency ANR (project EcoNet).
Emre Anakok was funded by the MathNum department of INRAE.
The authors thank Thomas Cortier whose Master thesis  was a preliminary exploration for this article.

\section{Funding}
This work was partially supported
by the grant ANR-18-CE02-0010-01 of the French National Research Agency ANR (project EcoNet).
Emre Anakok was funded by the MathNum department of INRAE.

\bibliography{biblio.bib}
\clearpage

\appendix
\section{Proof of CoOP-LBM identifiability}
\label{secproof}

\subsection{Introduction}

Let $R = (R_{i,j} , i = 1, \dots, n_1; j = 1, \dots, n_2)$ be the data matrix where $R_{i,j}\in \mathbb{N}$ is a non-negative number. It is supposed that $R = M \odot N $ with $M$ being a binary LBM and $N$ follows a Poisson distribution independant from $M$. More precisely, it is assumed that there exists a partition  $Z^1 = (Z^1_{i,k} ;\quad i = 1,\dots, n_1;\quad k \in \{1,\dots,Q_1\})$ and $Z^2 = (Z^2_{jl}; \quad j = 1,\dots, n_2;\quad l \in \{1,\dots,Q_2\})$ being binary indicators of row $i$ (resp. column $j$) belonging to row cluster $k$ (resp. column
cluster $j$), such that the random variable $M_{i,j}$ are conditionally independent knowing $Z^1$ and $Z^2$
with parameterized density 

$$\mathbb{P}(M_{i,j} = m | Z^1_{ik}=1, Z^2_{jl}=1) = \pi_{kl}^m (1-\pi_{kl})^{1-m}.$$

The conditional density of $M$ knowing  $Z^1$  and $Z^2$ is 

$$f_M (m|Z^1,Z^2;\theta_M) = \prod_{k,l} \prod_{i,j} \left( \pi_{kl}^{m_{i,j}}(1-\pi_{kl})^{1-m_{i,j}} \right)^{Z^1_{ik} Z^2_{jl}}.$$

It is also supposed that the row and column labels are independent, meaning that $\mathbb{P}(Z^1 , Z^2) =\mathbb{P}(Z^1 )\mathbb{P}( Z^2) $ with $\mathbb{P}(Z^1) = \prod_{i,k} {\alpha_k}^{Z^1_{ik}} $ and $\mathbb{P}(Z^2) = \prod_{j,l} {\beta_l}^{Z^2_{jl}} $, where $\alpha_k = \mathbb{P}(Z_{i,k}=1), k = 1,\dots,Q_1$ and  $\beta_l = {\mathbb{P}(Z_{j,l}=1)}, {l = 1,\dots,Q_2}$
The density is thus equal to 

$$f_M (m, \theta_M) = \sum_{(Z^1,Z^2) \in (\mathcal{Z}^1,\mathcal{Z}^2 )}\prod_{i,k} {\alpha_k}^{Z^1_{ik}} \prod_{j,l} {\beta_l}^{Z^2_{jl}} \prod_{k,l} \prod_{i,j} \left( \pi_{kl}^{m_{i,j}}(1-\pi_{kl})^{1-m_{i,j}} \right)^{Z^1_{ik} Z^2_{jl}} $$

where $\mathcal{Z}^1 $ $\mathcal{Z}^2$  denoting the sets of possible labels $Z^1$ and $Z^2$.
The density of $m$ is parameterized by ${\theta_M = (\alpha,\beta,\pi)}$ with $\alpha = ( \alpha_1 , \dots,\alpha_{Q_1})  $, $\beta = ( \beta_1 , \dots,\beta_{Q_2})  $, and $\pi = \pi_{k,l};\quad k=1,\dots,Q_1;\quad l=1,\dots,Q_2$.
\\

Moreover, it is supposed that $N_{i,j} \sim \mathcal{P}(\lambda_i\mu_jG)$ with $\lambda_i \in ]0,1]$, $\mu_j \in ]0,1]$, with  $\max_{i=1,\dots,n_1}\lambda_i=1$ ,$\max_{j=1,\dots,n_2}\mu_j=1$ and $G>0$ .  The density of $N$ is then equal to 

$$f_N(n,\theta_N) = \prod_{i,j}  \frac{(\lambda_i\mu_jG)^{n_{i,j}}}{n_{i,j}!}e^{-\lambda_{i}\mu_{j}G}.$$

The density is parameterized by $\theta_N = (\lambda,\mu,G)$ with $\lambda = (\lambda_1,\dots,\lambda_{n_1})$ $\mu=(\mu_1,\dots,\mu_{n_2})$, and $G>0$.
\\

The density of $R_{i,j} = M_{i,j} N_{i,j}$ will depend both of the value of $Z^1,Z^2$ and of $\lambda_i, \mu_j$

$$
\mathbb{P}(R_{i,j} = r| Z^1_{ik}=1,Z^2_{jl}=1;\theta) = \left\{
    \begin{array}{ll}
        \ \pi_{kl} \frac{(\lambda_i\mu_jG)^{r}}{r!}e^{-\lambda_i\mu_jG}& \mbox{if } r>0 \\
        \\
        1 - \pi_{kl} ( 1 - e^{-\lambda_i\mu_jG})& \mbox{if }  r=0
    \end{array}
\right.
$$
 where $\theta = (\theta_M ,\theta_N) =  (\alpha,\beta,\pi,\lambda,\mu,G)$

\begin{align*}
f_R (r, \theta) = \sum_{(Z^1,Z^2) \in (\mathcal{Z}^1,\mathcal{Z}^2 )}\prod_{i,k} {\alpha_k}^{Z^1_{ik}} \prod_{j,l} {\beta_l}^{Z^2_{jl}} \prod_{k,l}\Bigg(& \prod_{\substack{i,j\\r_{i,j}>0}} \pi_{kl} \frac{(\lambda_i\mu_jG)^{r}}{r!}e^{-\lambda_i\mu_jG}\\
&\cdot \prod_{\substack{i,j\\r_{i,j}=0}} \left(1 - \pi_{kl} ( 1 - e^{-\lambda_i\mu_jG})\right)
\Bigg)\,. 
\end{align*}
\begin{theorem}
With $\pi$, the matrix of the Bernoulli coefficients, $\alpha$ and $\beta$, the row and column
mixing proportions of the mixture, $\lambda_i \in ]0,1]$, $\mu_j \in ]0,1]$, with  $\max_{i=1,\dots,n_1}\lambda_i=1$, $\max_{j=1,\dots,n_2}\mu_j=1$, $G>0$  and  assume that $n_1 \geq 2Q_2 -1$ and $n_2 \geq 2Q_1-1 $ 
\begin{itemize}
    \item (H1) for all $1 \leq k \leq Q_1,\alpha_k  > 0$ and the coordinates of vector $\tau = \pi \beta$ are distinct
     \item (H2) for all $1 \leq k \leq Q_2,\beta_k  > 0$ and the coordinates of vector $\sigma= \alpha' \pi $ are distinct (where
$\alpha'$ is the transpose of $\alpha$)
\end{itemize}

then the CoOP-LBM is identifiable.
\end{theorem}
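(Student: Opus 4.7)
The plan is to follow the two-part structure suggested by the authors: first identify the sampling parameters $(\blambda,\bmu,G)$ from low-order features of the marginal distribution of $R_{i,j}$, then reduce the remaining problem to an LBM-type identifiability argument for $(\balpha,\bbeta,\bpi)$ in the spirit of \cite{celisse_consistency_2012} and \cite{brault_estimation_2014}.

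For the first part, I would start from the first moment
\[
\mathbb{E}[R_{i,j}] \;=\; \mathbb{E}[M_{i,j}]\,\mathbb{E}[N_{i,j}] \;=\; \bigl(\balpha^\top \bpi\,\bbeta\bigr)\,\lambda_i\mu_j G,
\]
which factorizes as a global constant $C=\balpha^\top\bpi\bbeta$ times $\lambda_i\mu_j G$. Fixing $j$ and taking ratios $\mathbb{E}[R_{i,j}]/\mathbb{E}[R_{i',j}] = \lambda_i/\lambda_{i'}$ identifies $\blambda$ up to a global scalar, and symmetrically for $\bmu$; the normalization constraints $\max_i\lambda_i = \max_j\mu_j = 1$ in (A4) then pin them down. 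To separate $G$ from $C$, I would exploit the conditional Poisson structure: for any $r\ge 1$,
\[
\frac{\mathbb{P}(R_{i,j}=r+1)}{\mathbb{P}(R_{i,j}=r)} \;=\; \frac{\lambda_i\mu_j G}{r+1},
\]
because the mixture over $(k,l)$ collapses in this ratio (the $\pi_{kl}$ and the $e^{-\lambda_i\mu_jG}$ factor cancel between numerator and denominator). This directly yields $\lambda_i\mu_j G$ and, combined with the previously identified $\lambda_i,\mu_j$, the value of $G$.

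For the second part, with $(\blambda,\bmu,G)$ now known, the conditional law of $R_{i,j}$ given $Z^1_{ik}=Z^2_{jl}=1$ depends only on $\pi_{kl}$. I would then transport the argument of \cite{brault_estimation_2014}: consider an entire row $(R_{i,1},\dots,R_{i,n_2})$, which, conditionally on the row-block label, is an independent product whose per-entry mean equals $\tau_k\,\lambda_i\mu_j G$ with $\tau_k=(\bpi\bbeta)_k$. Averaging or taking appropriate linear functionals across the $n_2$ columns produces a mixture over $Q_1$ components whose component ``centers'' are the coordinates of $\tau$. Assumption (A1) (distinctness of $\tau$) together with $n_2\ge 2Q_1-1$ furnishes a Vandermonde system with $2Q_1-1$ independent moments, enough to identify $\balpha$ and $\tau$ by the standard argument for identifiability of finite mixtures; the symmetric argument on columns, using (A2) and $n_1\ge 2Q_2-1$, identifies $\bbeta$ and $\sigma=\balpha^\top\bpi$. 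Finally, joint two-entry distributions over a pair of rows in different row-blocks and a pair of columns in different column-blocks disentangle the full matrix $\bpi$ from its marginal projections $\tau$ and $\sigma$, exactly as in the standard bipartite LBM identifiability proof.

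The main obstacle I anticipate is the second part: the Brault--Celisse machinery is designed for Bernoulli block distributions, whereas here each block produces a compound zero-inflated Poisson. The key adaptation is to observe that once $\lambda_i\mu_j G$ is known, the map $\pi_{kl}\mapsto \mathbb{P}(R_{i,j}=r\mid Z^1_{ik}=Z^2_{jl}=1)$ is affine and injective in $\pi_{kl}$ for every fixed $r$, so any moment identity in the classical LBM proof has a direct analogue here obtained by substituting the appropriate sampling-corrected quantity (typically $\pi_{kl}(1-e^{-\lambda_i\mu_jG})$ or $\pi_{kl}\lambda_i\mu_jG$). The algebra is routine but the bookkeeping of the $e^{-\lambda_i\mu_jG}$ factors across row/column aggregations is what needs care; the distinctness hypotheses (A1)--(A2) and the dimension bounds (A3) play the same structural role as in the unsampled case.
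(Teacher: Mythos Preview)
Your proposal is correct and follows the same two-part architecture as the paper: first recover $(\blambda,\bmu,G)$ from marginal features of $R_{i,j}$, then run a Celisse--Brault Vandermonde argument on sampling-corrected moments to get $(\balpha,\bbeta,\bpi)$.

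There is one small but genuine methodological difference worth recording. To isolate $G$ after $\blambda,\bmu$ are known, the paper uses the ratio $\mathbb{E}[R_{1,1}]/\mathbb{P}(R_{1,1}>0)=G/(1-e^{-G})$ and the injectivity of $x\mapsto x/(1-e^{-x})$. Your alternative via $\mathbb{P}(R_{i,j}=r+1)/\mathbb{P}(R_{i,j}=r)=\lambda_i\mu_j G/(r+1)$ is equally valid (and arguably cleaner, since the mixture over blocks cancels for the same reason: the block-dependent part of $\mathbb{P}(R_{i,j}=r)$ for $r\ge 1$ is just the scalar $\gamma=\balpha^\top\bpi\bbeta$). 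For Part 2 your sketch is at a higher level than the paper's explicit construction; the paper's concrete choice is $u_p=\mathbb{P}(R_{1,1}>0,\dots,R_{1,p}>0)$, which factorizes as $\prod_{j=1}^p(1-e^{-\lambda_1\mu_jG})\sum_k\alpha_k\tau_k^{\,p}$, so that dividing by the known product $K_p$ reduces exactly to the classical Hankel/Vandermonde system, and $\bpi$ is then extracted from the L-shaped joint probabilities $w_{pq}=\mathbb{P}(R_{1,1}>0,\dots,R_{1,p}>0,R_{2,1}>0,\dots,R_{q,1}>0)$. This is precisely the ``substitute $\pi_{kl}(1-e^{-\lambda_i\mu_jG})$ and divide out the known factor'' adaptation you anticipated; no further bookkeeping is needed beyond that.
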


\subsection{Proof}

The goal of the proof is to show that under assumption of Theorem 1 , there exist a unique parameter $\theta = (\alpha,\beta,\pi,\lambda,\mu,G)$, up to a permutation of row and column labels,
corresponding to $\mathbb{P}_\theta(r)$ the probability distribution function of matrix $r$ having at least $2Q_2 - 1$ rows and $2Q_1 - 1$ columns. The proof is in two parts : the first part deals with the identifiability of $(\lambda,\mu,G)$, the second part is adapted from the proof of Vincent Brault \textit{et al.} \cite{brault_estimation_2014}, deals with the identifiability of $(\alpha,\beta,\pi)$

\subsubsection{Identifiability of $(\lambda,\mu,G)$}
Let $\theta =(\alpha,\beta,\pi,\lambda,\mu,G)$ and $\theta ' = (\alpha',\beta',\pi',\lambda',\mu',G') $ two sets of parameters such as $\mathbb{P}_{\theta}=\mathbb{P}_{\theta'} $
Our first goal is to show that $(\lambda,\mu)$ = $(\lambda',\mu')$
As a reminder, $R_{i,j} = M_{i,j} N_{i,j}$. $M$ and $N$ are independant, we can deduce that $\mathbb{E}_\theta [R_{i,j}] = \mathbb{E}_\theta [M_{i,j}]\mathbb{E}_\theta [N_{i,j}]$.
Let  $\gamma = \mathbb{E}_\theta [M_{i,j}] = \sum_{k,l} \alpha_k \beta_l \pi_{kl}$ and $\gamma'  =  \mathbb{E}_{\theta'} [M_{i,j}]$ , we have

$$\mathbb{E}_\theta [R_{i,j}] = \lambda_i \mu_jG \gamma \quad = \quad \mathbb{E}_{\theta'} [R_{i,j}] = \lambda_i' \mu_j'G' \gamma'$$

\paragraph{Identifiability of $(\lambda,\mu)$}\mbox{}\\
Let 
 $\lambda = \lambda_1,\dots,\lambda_{n_1}$ with $\lambda_i \in ]0,1]$ and $\max\lambda_i = 1$. Without loss of generality we suppose that $\max\lambda_i = \lambda_1 = 1 $
 \\
 
 Let 
 $\lambda' = \lambda_1',\dots,\lambda_{n_1}'$ with $\lambda'_i \in ]0,1]$ and $\max\lambda'_i = 1$. Let $v \in \{1,\dots,n_1\}$ such that $\max\lambda'_i = \lambda'_v$ and now consider the quantities :
 \begin{align}
     &\mathbb{E}_\theta [R_{1,1}] = \lambda_1 \mu_1G \gamma \quad  \quad \mathbb{E}_{\theta'} [R_{1,1}] = \lambda_1' \mu_1'G' \gamma' \\
     &\mathbb{E}_\theta [R_{v,1}] = \lambda_v \mu_1G \gamma \quad  \quad \mathbb{E}_{\theta'} [R_{v,1}] = \lambda_v' \mu_1'G' \gamma'
 \end{align}

$\lambda_1 = 1$ and $ \lambda'_v=1$, moreover as $\mathbb{P}_{\theta}=\mathbb{P}_{\theta'} $ , we have that $\mathbb{E}_{\theta} [R_{1,1}] = \mathbb{E}_{\theta'} [R_{1,1}] $ and $\mathbb{E}_{\theta} [R_{v,1}] = \mathbb{E}_{\theta'} [R_{v,1}] $. From this we can see that $$\mathbb{E}_{\theta'} [R_{1,1}] = \lambda'_1 \mu'_1G' \gamma' = \lambda'_1 \mathbb{E}_{\theta'} [R_{v,1}] = \lambda'_1\mathbb{E}_{\theta}  [R_{v,1}] = \lambda_1'\lambda_v\mathbb{E}_{\theta}  [R_{1,1}]. $$

However, as $\mathbb{E}_{\theta} [R_{1,1}] = \mathbb{E}_{\theta'} [R_{1,1}] $, we can conclude that $\lambda_v \lambda_1'=1$ which implies that $\lambda'_1 = 1$ because both $\lambda_v\in ]0,1]$ and  $\lambda'_1\in ]0,1]$.
Knowing that $\lambda'_1 = 1 = \lambda_1$ we can conclude by seeing that $$\lambda'_i = \frac{\mathbb{E}_{\theta'} [R_{i,1}]}{\mathbb{E}_{\theta'} [R_{1,1}]} = \frac{\mathbb{E}_{\theta} [R_{i,1}]}{\mathbb{E}_{\theta} [R_{1,1}]} = \lambda_i$$ which proves that $\lambda = \lambda'$
Identifiability of $\mu$ is done in a similar way.

\paragraph{Identifiability of $G$}\mbox{}\\
Without loss of generality we suppose that $\max\lambda_i = \lambda_1 = 1 $ and $\max\mu_j = \mu_1 = 1 $, thanks to the precedent part we know that $\lambda = \lambda'$ and $\mu=\mu'$. Consider this quantities :

$$\frac{\mathbb{E}_{\theta} [R_{1,1}]}{\mathbb{P}_{\theta}(R_{1,1}>0)} \quad \text{and} \quad \frac{\mathbb{E}_{\theta'} [R_{1,1}]}{\mathbb{P}_{\theta'}(R_{1,1}>0)}  $$ which are equal because $\mathbb{P}_{\theta} = \mathbb{P}_{\theta'}$. We can calculate $$\frac{\mathbb{E}_{\theta} [R_{1,1}]}{\mathbb{P}_{\theta}(R_{1,1}>0)} = \frac{\lambda_1\mu_1G\gamma}{\gamma(1-e^{-\lambda_1\mu_1G})}=  \frac{G}{1-e^{-G}}\quad, $$ $x\mapsto \frac{x}{1-e^{-x}}$ is an injective function and therefore $G = G'$.

\subsubsection{Identifiability of $(\alpha,\beta,\pi)$}
In the precedent part we showed that given $\mathbb{P}_{\theta}$, $(\lambda,\mu,G)$ are defined in a unique manner, hence allowing us 
to determine $(\lambda,\mu,G)$. The following part is based from  \cite{brault_estimation_2014} and deals with the identifiability of $(\alpha,\beta,\pi)$.

\paragraph{Identifiability of $\alpha$}\mbox{}\\

Let $$\tau_k = (\pi \beta)_k = \sum_{l=1}^{Q_2} \pi_{k,l}\beta_l $$
 
 (H1) assures us that $(\tau_k)$ are distinct, therefore the $Q_1\times Q_1$ matrix $T$ defined by $T_{i,k} = (\tau_k)^{i-1}$ for $1\leq i\leq Q_1$ and $1\leq k\leq Q_1$
is Vandermonde, and hence invertible. Consider now $u_p$, the
probability to have a number greater than 0 on the $p$ first cells of the first row of $r$.
\begin{align}
    u_p & = \mathbb{P}(r_{1,1}>0,\dots,r_{1,p}>0)\\
    &= \sum_{k, l_1,\dots,l_p}\mathbb{P}(Z^1_{1k}=1)\prod_{j=1}^p\left[\mathbb{P}(r_{1,j}>0|Z^1_{1k}=1, Z^2_{jl_j}=1)\mathbb{P}(Z^2_{jl_j}=1) \right]\\
    &= \sum_{k=1}^{Q_1}\mathbb{P}(Z^1_{1k}=1)\prod_{j=1}^p\left[\sum_{l_j=1}^{Q_2}
    \mathbb{P}(r_{1,j}>0|Z^1_{1k}=1, Z^2_{jl_j}=1)\mathbb{P}(Z^2_{jl_j}=1) \right]\\
    &=\sum_{k=1}^{Q_1}\mathbb{P}(Z^1_{1k}=1)\prod_{j=1}^p\left[\sum_{l_j=1}^{Q_2}\beta_{j_l} 
    \pi_{k,l_j}(1-e^{-\lambda_1\mu_jG})\right]\\
    &=\sum_{k=1}^{Q_1}\mathbb{P}(Z^1_{1k}=1)\prod_{j=1}^p\left[\sum_{l_j=1}^{Q_2}\beta_{j_l} 
    \pi_{k,l_j}\right] \prod_{j=1}^p \left[(1-e^{-\lambda_1\mu_jG})\right]\\
    &=\prod_{j=1}^p \left[(1-e^{-\lambda_1\mu_jG})\right]\sum_{k=1}^{Q_1}\mathbb{P}(Z^1_{1k}=1)\left[\sum_{l_j=1}^{Q_2}\beta_{j_l} 
    \pi_{k,l_j}\right]^p \\
    &=\prod_{j=1}^p (1-e^{-\lambda_1\mu_jG})\sum_{k=1}^{Q_1}\alpha_k(\tau_k)^p
\end{align}
with a given $\mathbb{P}(r)$, $u_1,\dots,u_{2Q_2-1}$ are known and $K_p = \prod_{j=1}^p (1-e^{-\lambda_1\mu_jG}) >0 $ are also known thanks the indentifiability of $(\lambda,\mu,G)$. We denote $u_0 =1$, $K_0 = 1$  and let now $U$ be the $(Q_1+1) \times Q_1$ matrix defined by
$$U_{i,j} = \frac{u_{i+j-2}}{K_{i+j-2}} = \sum_{k=1}^{Q_1}(\tau_k)^{i-1}\alpha_k(\tau_k)^{j-1} $$ and let $U_i$ be
 the square matrix obtained by removing the row $i$ from $U$. We can see if we write $A = diag(\alpha)$ that
 
 $$U_{Q_1+1} = TAT' .$$
 
 $T$ is unknown at this stage, but can be found by noticing that the coefficient $\tau$ are the root of the following polynomial \citep{celisse_consistency_2012} : 
 
 $$G(x)=\sum_{k=0}^{Q_1}(-1)^{k+Q_1}D_{k+1} x^{k}$$
 where $D_{k+1} = detU_{k+1}$ and $D_{Q_1+1}\neq 0$ because $U_{Q_1+1}$ is the product of invertible matrices. Consequently it is possible to determine $\tau$ and $T$. As $T$ is invertible, we can conclude that $A = T^{-1}U_{Q_1+1}T'^{-1}$ is defined in an unique manner.

\paragraph{Identifiability of $\beta$}\mbox{}\\

The identifiability of $\beta$ is done the same way as the identifiability of $\alpha$, by considering hypothesis (H2) and

$$\sigma_l = \sum_{k=1}^{Q_1} \pi_{k,l}\alpha_k,\quad  S_{j,l} = (\sigma_l)^{j-1}
$$

$$v_q  = \mathbb{P}(r_{1,1}>0,\dots,r_{q,1}>0), \quad H_q = \prod_{i=1}^q (1-e^{-\lambda_i\mu_1G}) >0  $$ 
$$V_{i,j} = \frac{v_{i+j-2}}{H_{i+j-2}},\quad B=diag(\beta) $$ and showing that $$B = S^{-1}V_{Q_2+1}S'^{-1} $$

\subsubsection{Identifiability of $\pi$}

Consider now $w_{pq}$, the
probability to have a number greater than 0 on the $p$ first cells of the first row of $r$ and $q$ first cells of the first column of $r$ with $1\leq p \leq Q_1$ and $1\leq q \leq Q_2$

\begin{align}
    w_{pq} &= \mathbb{P}(r_{1,1}>0,\dots,r_{1,p}>0,r_{2,1}>0,\dots r_{q,1}>0)\notag\\
    &= \sum_{k=1}^{Q_1}\sum_{l=1}^{Q_2}\alpha_k\beta_l \pi_{kl}(1-e^{-\lambda_1\mu_1G})
    \prod_{j=2}^p\left[\sum_{l_j=1}^{Q_2}\beta_{j_l} 
    \pi_{k,l_j}(1-e^{-\lambda_1\mu_jG})\right]
    \prod_{i=2}^q\left[\sum_{k_i=1}^{Q_1}\alpha_{i_k} 
    \pi_{k_i,l}(1-e^{-\lambda_i\mu_1G})\right]\notag\\
    &=  \frac{K_p}{K_1} \frac{H_q}{H_1}(1-e^{-\lambda_1\mu_1G}) \sum_{k=1}^{Q_1}\sum_{l=1}^{Q_2}\alpha_k (\tau_k)^{p-1} \pi_{kl}
    \beta_l (\sigma_l)^{q-1}\notag\\
    &= \psi_{pq}   \sum_{k=1}^{Q_1}\sum_{l=1}^{Q_2}\alpha_k (\tau_k)^{p-1} \pi_{kl}
    \beta_l (\sigma_l)^{q-1}\notag
\end{align}
where $\psi_{pq} = \frac{K_p}{K_1} \frac{H_q}{H_1}(1-e^{-\lambda_1\mu_1G}) >0$ is known thanks the indentifiability of $(\lambda,\mu,G)$\\

Let $W_{p,q} = \frac{w_{pq}}{\psi_{pq}}$ we can see that 
$$W = RA\pi BS' $$ thus 

$$\pi = A^{-1}R^{-1}WS'^{-1}B^{-1} $$ is defined in a unique manner.

\section{Details on the EM algorithm}
\label{M-Steb b}
\paragraph{M-Step b)}

The likelihood of $R$ given $M$ can be written as 

$$ \mathcal{L}(R|M) = \prod_{i,j}\left( e^{-\lambda_i \mu_j G} \frac{(\lambda_i \mu_j G)^{R_{i,j}}}{R_{i,j}!}\right)^{m_{i,j}} \underbrace{\left( \mathbbm{1}_{\{R_{i,j}=0\}}\right)^{1-m_{i,j}}}_{=1}.$$
The right part of the product is always equal to one because $m_{i,j}=0$ implies that $R_{i,j}=0$. Then, 
$$\log \mathcal{L}(R|M) = \sum_{i,j} m_{i,j} \left[ -\lambda_i \mu_j G  + R_{i,j}(\log(\lambda_i)+\log(\mu_j )+\log(G)) - \log(R_{i,j}!) \right] .$$

The derivative of this likelihood with respect to $k^{th}$ coordinate of $\lambda$ gives:
$$\frac{\partial\log\mathcal{L}}{\partial\lambda_k}= -\sum_{j=1}^{n_2}m_{k,j}\mu_{j}G+ \sum_{j=1}^{n_2}m_{k,j}\frac{R_{k,j}}{\lambda_k}\quad \text{thus} \quad
\frac{\partial\log\mathcal{L}}{\partial\lambda_k} = 0 \text{ if } \lambda_k= \frac{\sum_{j=1}^{n_2}m_{k,j}R_{k,j}}{\sum_{j=1}^{n_2}m_{k,j}\mu_{j}G}\,.$$ 

Note that because $m_{i,j}=0$ implies $R_{i,j}=0$ we have that $\sum_{j=1}^{n_2}m_{k,j}R_{k,j}= \sum_{j=1}^{n_2}R_{k,j} $. With the same reasoning, $$\frac{\partial\log\mathcal{L}}{\partial\mu_l} = 0 \text{ if } \mu_l= \frac{\sum_{i=1}^{n_1}R_{i,l}}{\sum_{i=1}^{n_1}m_{i,l}\lambda_{l}G}.$$

By replacing $\mu_j$ in the first derivative by the expression found in the second, we have 

$$\frac{\partial\log\mathcal{L}}{\partial\lambda_k} = 0 \text{ if } \lambda_k= \frac{\sum_{j=1}^{n_2}R_{k,j}}{\sum_{j=1}^{n_2}m_{k,j}\frac{\sum_{i=1}^{n_1}R_{i,j}}{\sum_{i=1}^{n_1}m_{i,j}\lambda_{j}G}G} = \frac{\sum_{j=1}^{n_2}R_{k,j}}{\sum_{j=1}^{n_2}m_{k,j}\frac{\sum_{i=1}^{n_1}R_{i,j}}{\sum_{i=1}^{n_1}m_{i,j}\lambda_{j}}}.$$ We can then find $\lambda$ with a fixed point iteration. However the fixed point of this equation is not unique, indeed if $\lambda$ is a fixed point, then $a \lambda$ with $a>0$ is also a fixed point. Because it is supposed that $max_i \lambda_i =1$, a solution $\lambda$ that has been found by the fixed point algorithm can be divided by its maximum coordinate in order to have an estimation of $\lambda$. Once $\lambda$ calculated, we know that $G\mu_l= \frac{\sum_{i=1}^{n_1}R_{i,l}}{\sum_{i=1}^{n_1}m_{i,l}\lambda_{l}}$ can be obtained. Finally, we deduce $G$ and $\mu$ by fixing $\max_j \mu_j = 1$.
As the matrix $M$ is not observed, it is replaced by a simulated version of it  $\Tilde{M}_{(n)}$.

\section{Proof of ICL formula}
\label{ICL}
The proof is similar to the proof given in \cite{daudin_mixture_2008}, the prior distribution $g$ can be decomposed as $g(\btheta|m_{Q_1,Q2}) = g(\balpha|m_{Q_1,Q2}) \times g(\bbeta|m_{Q_1,Q2} )\times g(\bpi,\blambda,\bmu,G|m_{Q_1,Q2})$, which allows us to write 
$$\log\mathcal{L}(R,\bZ^1, \bZ^2,m_{Q_1,Q2}) = \log\mathcal{L}(\bZ^1|m_{Q_1,Q2}) + \log\mathcal{L}( \bZ^2|m_{Q_1,Q2}) + \log\mathcal{L}(R|\bZ^1, \bZ^2,m_{Q_1,Q2}) .$$

Using a non-informative Jeffreys prior and by replacing the missing data $\bZ^1$ and $\bZ^2$ by $\widehat{\bZ^1} $ and $\widehat{\bZ^2} $, we can approximate the first two terms by \begin{align}
    &\log\mathcal{L}(Z^1|m_{Q_1,Q2}) + \log\mathcal{L}( \bZ^2|m_{Q_1,Q2})\approx\notag\\  &\underset{\balpha}{\max}\log\mathcal{L}(\widehat{\bZ^1}|\balpha,m_{Q_1,Q2})-\frac{Q_1-1}{2}\log(n_1)  + \underset{\bbeta}{\max}\log\mathcal{L}(\widehat{\bZ^2}|\beta,m_{Q_1,Q2}) - \frac{Q_2-1}{2}\log(n_2).
\end{align}

For the third term, we can consider that $R$ is made of $n_1\times n_2$ random variables, depending on labels and on $\bpi,\blambda,\bmu,G$. $\log\mathcal{L}(R|\bZ^1, \bZ^2, m_{Q_1,Q2})$ can be calculated using a BIC approximation:

\begin{align}
    &\log\mathcal{L}(R|\bZ^1, \bZ^2,m_{Q_1,Q2})\approx\notag\\ &\underset{\bpi,\blambda,\bmu,G}{max}\log\mathcal{L}(R|\widehat{\bZ^1}, \widehat{\bZ^2},\bpi,\blambda,\bmu,G,m_{Q_1,Q2}) - \frac{Q_1Q_2 + n_1 +n_2-1}{2}\log(n_1n_2).
\end{align}

The sum of all the terms yields the final result.

\section{Supplementary figures}
\label{secsupplfig}

\begin{figure}[H]
\centering
\includegraphics[width=10cm]{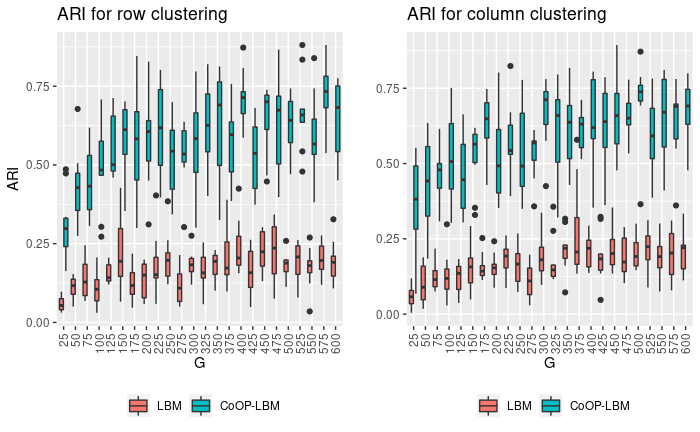}
\caption{ARI scores for rows and columns blocks when the number of blocks is known.}
\label{fig5.2}
\end{figure}

\begin{figure}[H]
  \centering
  \includegraphics[width=.4\textwidth]{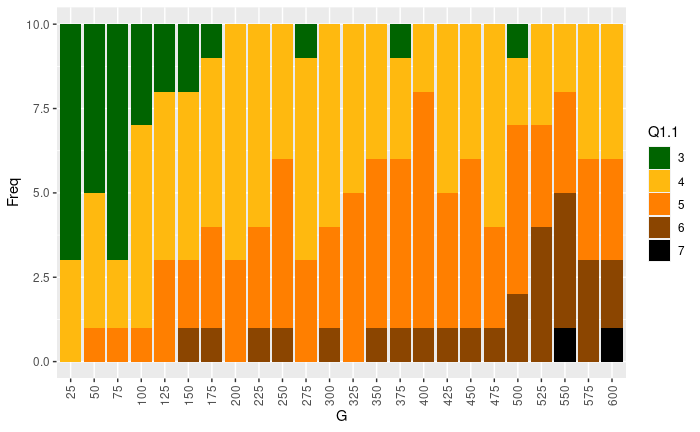}
  \includegraphics[width=.4\textwidth]{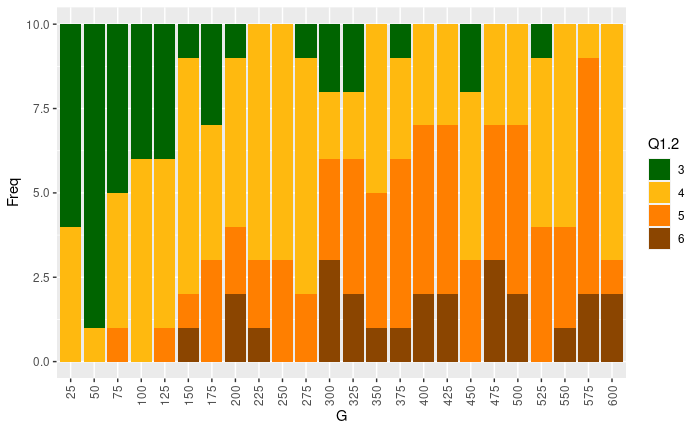}
  \label{fig_Q0}
  \caption{Estimated $Q_1$ (left) and $Q_2$ (right) with LBM for different value of $G$}
  \end{figure}
  \begin{figure}[H]
  \includegraphics[width=.4\textwidth]{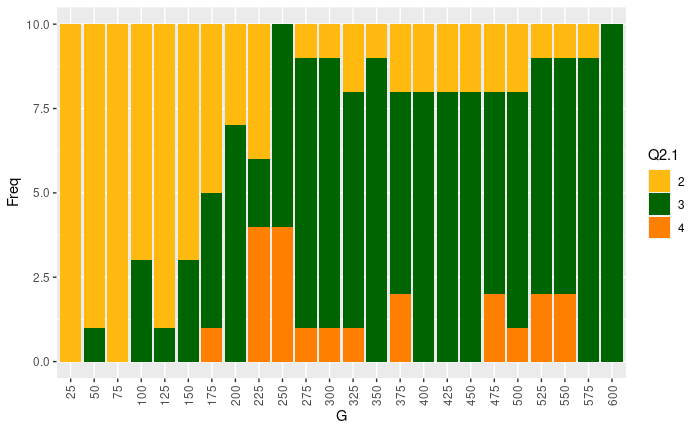}\quad
  \includegraphics[width=.4\textwidth]{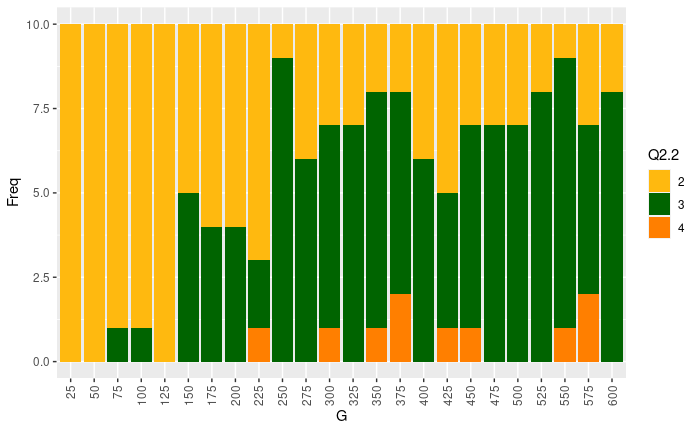}
  \caption{Estimated $Q_1$ (left) and $Q_2$ (right) with CoOP-LBM for different value of $G$}
  \label{fig_Q}
\end{figure}

\end{document}